\numberwithin{equation}{section}
\newcommand{\av}{\mbox{\boldmath$\alpha$}}
\newcommand{\bv}{\mbox{\boldmath$\beta$}}
\newcommand{\hbvg}{\mbox{\boldmath$\hat \beta$}}
\newcommand{\ev}{\mbox{\boldmath$e$}}
\newcommand{\epiv}{\mbox{\boldmath$\epsilon$}}
\newcommand{\Gv}{\mbox{\boldmath$G$}}
\newcommand{\Kv}{\mbox{\boldmath$K$}}
\newcommand{\Hv}{\mbox{\boldmath$H$}}
\newcommand{\gv}{\mbox{\boldmath$g$}}
\newcommand{\gav}{\mbox{\boldmath$\gamma$}}
\newcommand{\Qv}{\mbox{\boldmath$Q$}}
\newcommand{\Sigv}{\mbox{\boldmath$\Sigma$}}
\newcommand{\Mv}{\mbox{\boldmath$M$}}
\newcommand{\Xv}{\mbox{\boldmath$X$}}
\newcommand{\xv}{\mbox{\boldmath$x$}}
\newcommand{\Wv}{\mbox{\boldmath$W$}}
\newcommand{\uv}{\mbox{\boldmath$u$}}
\newcommand{\Vv}{\mbox{\boldmath$V$}}
\newcommand{\Psiv}{\mbox{\boldmath$\Psi$}}
\newcommand{\Phiv}{\mbox{\boldmath$\Phi$}}
\newcommand{\yv}{\mbox{\boldmath$y$}}
\newcommand{\Zv}{\mbox{\boldmath$Z$}}
\newcommand{\zv}{\mbox{\boldmath$z$}}
\newcommand{\wv}{\mbox{\boldmath$w$}}
\newcommand{\Iv}{\mbox{\boldmath$I$}}
\newcommand{\lv}{{\bf 1}}
\newcommand{\piv}{\mbox{\boldmath$\pi$}}
\newcommand{\ath}{{\it A. thaliana }}
\newcommand{\BF}{\ensuremath{{\rm BF}}}
\newcommand{\ABF}{\ensuremath{{\rm ABF}}}
\newtheorem{prop}{PROPOSITION}
\newtheorem{lemma}{LEMMA}
\title{{\bf Bayesian Model Comparison in Genetic Association Analysis: Linear Mixed Modeling and SNP Set Testing}}
\author
{Xiaoquan Wen \\
Department of Biostatistics, University of Michigan, Ann Arbor, USA}
\date{}
\begin{document}

\maketitle

\begin{abstract}
  {We consider the problems of hypothesis testing and model comparison under a flexible Bayesian linear regression model whose formulation is closely connected with the linear mixed effect model and the parametric models for SNP set analysis in genetic association studies. 
We derive a class of analytic approximate Bayes factors and illustrate their connections with a variety of frequentist test statistics, including the Wald statistic and the variance component score statistic. 
Taking advantage of Bayesian model averaging and hierarchical modeling, we demonstrate some distinct advantages and flexibilities in the approaches utilizing the derived Bayes factors in the context of genetic association studies. 
We demonstrate our proposed methods using real or simulated numerical examples in applications of single SNP association testing, multi-locus fine-mapping and SNP set association testing.}  \\
{{\bf Keywords}: Bayes factor; Linear mixed model; SNP set analysis; Genetic association; Model comparison}
\end{abstract}

\section{Introduction}

In the past decades, genetic association studies have taken a prominent position in uncovering the role of genetic variants in disease etiology. 
Most recently, two related statistical approaches have become especially important in the analysis of genetic association data: the use of linear mixed models (LMM) to control for confounding factors and account for polygenic effects and the application of SNP set analysis for regions of (rare) genetic variants. 
As demonstrated by many authors \citep{Kang2010,Segura2012, Zhou2012, Zhou2013}, linear mixed models effectively thwart the identification of false positive associations caused by relatedness or population structures (e.g., cryptic relatedness) in the samples while at the same time increase the power of detecting genuine genetic association signals. 
SNP set testing \citep{Madsen2009, Wu2011, Lee2012} is emerging as a method of choice in detecting associations of rare genetic variants, which may be critical in explaining the phenomenon of ``missing heritability".     
Recent studies have also shown the necessity of jointly applying both approaches when analyzing the genetic association of rare variants to control for population stratification or using pedigree data.

Currently, the majority of the methodological work  employing LMM and/or SNP set analysis in genetic association studies has focused on reporting $p$-values for hypothesis testing. 
In this paper, we discuss a Bayesian alternative to address both topics within the model comparison framework in which hypothesis testing is regarded as a special case. 
We first show that both problems can be naturally formulated by a unified Bayesian parametric model, and we then derive a class of analytic approximate Bayes factors for use as our primary statistical device for model comparison. 
We establish the connections between the approximate Bayes factors and various commonly applied frequentist test statistics in a similar fashion, as reported by \cite{Wakefield2009, Wen2014a, Wen2014b}.

Despite its similarities in performance to the frequentist approaches in traditional hypothesis testing settings, the Bayesian approach exhibits great convenience and flexibility in dealing with complicated practical settings within and beyond hypothesis testing. 
One of the most significant advantages of the Bayesian comparison method is its acceptance of explicitly modeling various alternative scenarios (which are not necessarily nested) and the fluidity with which it combines the evidence from the data via Bayesian model averaging. 
Beyond single unit (i.e., either a SNP or a SNP set) association testing, we show that the Bayesian model comparison approach can be straightforwardly extended to a joint analysis of multiple association signals, especially when dealing with linkage disequilibrium (LD) among SNPs commonly present in the genetic data. 
We illustrate a highly-efficient multi-locus fine-mapping approach that is facilitated by our results based on approximate Bayes factors.

\section{Model and Notations}

We consider a general form of the linear mixed model,
\begin{equation}\label{lmm}
  \yv = \Xv \av + \Gv \bv + \uv  + \ev,~ \ev \sim {\rm N}(0, \tau^{-1} \Iv),
\end{equation}
where $\yv$ is an $n$-vector of quantitative response measurements, $\Xv$ is an $n \times q$  matrix of covariate variables to be controlled as {\em fixed effects} and their coefficients are encoded in the $q$-vector $\av$.
$\Gv$ is an $n \times p$  matrix of covariates whose effect, represented by the $p$-vector $\bv$, is of primary interest for inference. 
Finally, the $n$-vectors $\uv$ and $\ev$ represent the {\em random effects} and the i.i.d residual errors, respectively.
In the general LMM inference framework, the random effects vector, $\uv$, is assumed to be drawn from a multivariate normal (MVN) distribution, i.e.,
\begin{equation} \label{re.def}
  \uv \sim {\rm N}(\bf{0}, \lambda \tau^{-1} \Kv ),
\end{equation} 
where the $n \times n$ matrix $\Kv$ is assumed known (while the variance component parameter $\lambda$ is typically unknown).
In typical genetic applications,  $\Gv$ represents the genotypes of $p$ candidate SNPs, $\Xv$ includes intercept term and factors like age, sex that need to be controlled for, and $\uv$ usually represents the random effects due to cryptic genetic relatedness or population structure. The ultimate goal is to make inference of the genetic effect $\bv$. 

We now present a Bayesian counterpart of the LMM, the likelihood part of which is identical to (\ref{lmm}). 
From the Bayesian perspective, it is natural to regard the ``random effect" assumption (\ref{re.def}) as a standard MVN prior on $\uv$.   
For {\em controlled} ``fixed" effect coefficient $\av$,  we assume the MVN prior:
\begin{equation}
  \av \sim {\rm N}( \bf{0}, \Psiv), 
\end{equation}  
where  $\Psiv$ is a diagonal matrix. When performing inference, we take the limit   
$\Psiv^{-1} \to \bf{0}$,
which essentially assigns independent flat priors to each fixed effect coefficient.
A flat prior might be interpreted as an assumption that the {\it a priori} effects of $\av$ are extremely large. This assumption intuitively leads to a conservative inference on $\bv$. However, for variables that must be controlled for, such conservative assumptions are most likely welcome. 

We also assign an MVN prior for the parameter of interest, $\bv$, such that
\begin{equation}
  \bv \sim {\rm N}( \bf{0}, \Wv).
\end{equation}
The variance-covariance matrix $\Wv$ fully characterizes a distinct candidate model in our model comparison framework. The choice of $\Wv$ is context-dependent and has critical implications on the inference results. 
In practice, we recommend modeling the effect size on the unit-free scales of signal-noise ratios \citep{Wen2014a, Wen2014b} by assigning an MVN prior on the {\em standardized effect}, i.e., $\sqrt{\tau}\,\bv \sim {\rm N}(\bf{0}, \Phiv)$, which induces a prior variance matrix on the original scale of $\bv$ as $\Wv = \tau^{-1} \Phiv$. (Note that the prior on the random effect $\uv$ is formulated in the same scale.)

Finally, we assume a general joint prior distribution,  $p(\lambda, \tau)$, for the variance component parameters. 
As we will show later, the actual functional form of $p(\lambda, \tau)$ has little impact on our asymptotic approximations of Bayes factors.
To emphasize the connection with the frequentist linear mixed effect model, we will henceforth call the above Bayesian linear regression model the Bayesian linear mixed effect model (BLMM).

\section{Model Comparison in the BLMM}

We derive Bayes factors for the BLMM in order to perform Bayesian model comparisons. 
More specifically, we consider a space of candidate BLMMs that only differ in their specifications of $\Wv$. 
We denote $H_0$ as the trivial null model, in which $\bv \equiv 0$  (or equivalently $\Wv = \bf{0}$), and we define a null-based Bayes factor for an alternative model characterized by its prior variance on $\bv$ as
\begin{equation}
  \label{bf.def}
   \BF(\Wv) = \lim_{\Psiv^{-1} \to 0} \frac{P(\yv \mid \Xv,  \Gv, \Kv, \Wv)}{P(\yv \mid \Xv, \Gv, \Kv, H_0)} =  \lim_{\Psiv^{-1} \to 0}  \frac{P(\yv \mid \Xv, \Gv, \Kv, \Wv)}{P(\yv \mid \Xv, \Gv, \Kv, \Wv=\bf{0})}.
\end{equation}  


To present our results regarding the Bayes factors, we begin by introducing several necessary additional notations. 
We denote $\hat \av, \hat \bv, \hat \lambda$ and $\hat \tau$ as the MLEs of the full LMM model (\ref{lmm}) by treating $\bv$ as a fixed effect parameter. In addition, we denote $\hat \Vv = {\rm Var}(\hat \bv)$. 
Correspondingly, we use $\tilde \av, \tilde \tau$ and $\tilde \lambda$ to represent the MLEs of the null model, where $\bv$ is restricted to $0$. 
Furthermore, provided that parameter $\lambda$ is {\em known}, we note that $\hat  \bv$  can be analytically computed as a function of $\lambda$ (Appendix A.1), which we denote by $\hat \bv (\lambda)$. 
Accordingly, we use $\hat \Vv(\lambda, \tau)$ to represent the corresponding variance of  $\hat \bv(\lambda)$ (specifically, when $\lambda = \hat \lambda$, $\hat \bv (\hat \lambda) = \hat \bv$ and $\hat \Vv(\hat \lambda, \hat \tau) = \hat \Vv$). 
Finally, we consider a class of general estimators of $\lambda$, denoted by $\check \lambda$, for which a tuning parameter $\kappa \in [0,1]$ is built-in. 
The statistical details of this class of estimators are explained in Appendix A.2. Most importantly, it follows that $\check \lambda (\kappa = 0) = \tilde \lambda,$ and
$  \check \lambda (\kappa = 1) = \hat \lambda$ for the two extreme $\kappa$ values.
Deriving from $\check \lambda$, we establish a corresponding estimator of $\tau$, denoted by $\check \tau (\kappa)$, which can be analytically expressed in terms of $\check \lambda$ (see Appendix A.2) and also shares a similar property such that
$\check \tau (\kappa = 0) = \tilde \tau$, and $  \check \tau (\kappa = 1) = \hat \tau$.
Finally, we use the notations $\check \bv = \hat \bv (\check \lambda), \check \Vv = \hat \Vv(\check \lambda, \check \tau)$. In the case that $\Wv$ is specified as a function of $\lambda$ and/or  $\tau$, we denote $\check \Wv = \Wv(\check \lambda, \check \tau)$.
With these additional notations, we show that the desired Bayes factor can be approximated analytically. We summarize the main result in proposition 1, whose formal proof is given in Appendix A.2. 
\begin{prop}
Under the BLMM, the Bayes factor can be approximated by 
\begin{equation} \label{abf.exp}
   \ABF(\Wv, \kappa) = | \Iv + \check \Vv^{-1} \check \Wv |^{-\frac{1}{2}} \cdot \exp \left( \frac{1}{2} \check \bv' \check \Vv^{-1} \left[\check \Wv ( \Iv + \check \Vv^{-1} \check \Wv)^{-1} \right] \check \Vv^{-1} \check \bv \right).
\end{equation}
It follows that
 \begin{equation*}
 \BF(\Wv) = \ABF(\Wv, \kappa)\cdot \left(1+ O \left(\frac{1}{n}\right)\right), ~\mbox{ for any } \kappa \in [0,1].
 \end{equation*}  
\end{prop}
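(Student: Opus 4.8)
The plan is to split the computation into an exact, closed-form step conditional on the variance components $(\lambda,\tau)$, followed by a Laplace integration over $(\lambda,\tau)$ that controls both the $O(1/n)$ error and the freedom in $\kappa$. First I would marginalize the Gaussian random effect: integrating $\uv$ out of (\ref{lmm})--(\ref{re.def}) gives $\yv \mid \av,\bv,\lambda,\tau \sim {\rm N}(\Xv\av+\Gv\bv,\ \tau^{-1}(\Iv+\lambda\Kv))$, so for \emph{fixed} $(\lambda,\tau)$ the model is an ordinary Gaussian linear model with known error covariance. Taking the flat-prior limit $\Psiv^{-1}\to\mathbf{0}$ for $\av$ projects onto the complement of the column space of $\Xv$ in the induced inner product and leaves the GLS estimate $\hat\bv(\lambda)$ as a sufficient statistic with $\hat\bv(\lambda)\mid\bv \sim {\rm N}(\bv,\hat\Vv(\lambda,\tau))$ (Appendix A.1). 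Since the prior $\bv\sim{\rm N}(\mathbf{0},\Wv)$ is also Gaussian, integrating ${\rm N}(\hat\bv;\bv,\hat\Vv)$ against it yields a ${\rm N}(\mathbf{0},\hat\Vv+\Wv)$ marginal for $\hat\bv$, and dividing by the null marginal ${\rm N}(\mathbf{0},\hat\Vv)$ gives, via the determinant identity $|\hat\Vv+\Wv|/|\hat\Vv| = |\Iv+\hat\Vv^{-1}\Wv|$ and the rearrangement $\hat\Vv^{-1}-(\hat\Vv+\Wv)^{-1} = \hat\Vv^{-1}\Wv(\Iv+\hat\Vv^{-1}\Wv)^{-1}\hat\Vv^{-1}$, exactly expression (\ref{abf.exp}) with $(\lambda,\tau)$ in place of $(\check\lambda,\check\tau)$. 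The divergent factor from the flat $\av$ prior is identical in numerator and denominator and cancels, so this conditional Bayes factor, call it $\BF(\Wv\mid\lambda,\tau)$, is exact.

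Next I would reinstate the prior and write $\BF(\Wv) = \big[\int m_1(\lambda,\tau)\,p\,d\lambda\,d\tau\big]\big/\big[\int m_0(\lambda,\tau)\,p\,d\lambda\,d\tau\big]$, where $m_0$ is the null marginal likelihood, $m_1 = \BF(\Wv\mid\lambda,\tau)\,m_0$, and both $\log m_0$ and $\log m_1$ are $O(n)$. Applying Laplace's method to numerator and denominator separately introduces a relative error $O(1/n)$ in each, hence in their ratio. Because $p(\lambda,\tau)$ is fixed and smooth while the log-integrands grow like $n$, the Laplace maximizers $(\lambda_j^*,\tau_j^*)$ differ from the pure-likelihood maximizers by $O(1/n)$, and $p$ enters only through a ratio $p(\lambda_1^*,\tau_1^*)/p(\lambda_0^*,\tau_0^*)=1+O(1/n)$, with the scaled-Hessian factors likewise contributing $1+O(1/n)$. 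This is the precise sense in which the functional form of $p(\lambda,\tau)$ is immaterial, and it reduces the leading term to $\exp\{\ell_1(\lambda_1^*,\tau_1^*)-\ell_0(\lambda_0^*,\tau_0^*)\}$ with $\ell_j=\log m_j$.

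The freedom in $\kappa$ enters in showing this leading term equals $\ABF(\Wv,\kappa)=\BF(\Wv\mid\check\lambda,\check\tau)$ up to $1+O(1/n)$ for \emph{every} $\kappa\in[0,1]$. Writing $\ell_1(\lambda_1^*,\tau_1^*)-\ell_0(\lambda_0^*,\tau_0^*) = \log\BF(\Wv\mid\lambda_1^*,\tau_1^*) + \big[\ell_0(\lambda_1^*,\tau_1^*)-\ell_0(\lambda_0^*,\tau_0^*)\big]$, the bracketed term is a second-order remainder of order $O(n)\cdot O(1/n^2)=O(1/n)$, since $\ell_0$ is stationary at $(\lambda_0^*,\tau_0^*)$ with Hessian of order $n$ while the displacement is $O(1/n)$. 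The structural fact that drives this is that $\bv$ is low-dimensional relative to $n$: adding or removing the fixed effect $\Gv\bv$ perturbs the $n$-dimensional estimating equations for $(\lambda,\tau)$ by only $O(1/n)$ when the underlying Wald-type statistic is bounded. Consequently $\tlambda=\check\lambda(0)$, $\hlambda=\check\lambda(1)$, every interpolant $\check\lambda(\kappa)$, and the Laplace points $(\lambda_0^*,\tau_0^*)$, $(\lambda_1^*,\tau_1^*)$ all lie within $O(1/n)$ of one another, while $\partial_\lambda\log\BF(\Wv\mid\lambda,\tau)$ and $\partial_\tau\log\BF(\Wv\mid\lambda,\tau)$ are $O(1)$ in this regime. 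Hence $\BF(\Wv\mid\cdot)$ takes the same value at all of these points up to $O(1/n)$, giving the approximation uniformly in $\kappa$.

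\textbf{Main obstacle.} I expect the crux to be the order-counting of the last step: establishing that the null and full variance-component estimators (and the Laplace evaluation points) cluster within $O(1/n)$, and that $\log\BF(\Wv\mid\lambda,\tau)$ has $O(1)$ derivatives in the variance components. Both rely on $p$ being fixed as $n\to\infty$ and on a bounded-signal (local-alternative or null) asymptotic regime; under fixed, strong alternatives the null and full estimators of $(\lambda,\tau)$ can separate at $O(1)$ and the exponent in (\ref{abf.exp}) becomes sensitive to which estimator is substituted. Delimiting the regime precisely and controlling the uniformity of the Laplace remainder across $\kappa$ is therefore the delicate part of the argument.
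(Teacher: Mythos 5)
Your proposal is correct in substance, but it reaches the conclusion by a genuinely different route from the paper, and the comparison is instructive. For the known-$(\lambda,\tau)$ step the two arguments are equivalent: you derive the exact conditional Bayes factor from Gaussian sufficiency of $\hat\bv(\lambda)$ and convolution, while the paper whitens the data by $\Sigv^{-1/2}$ and invokes Lemma 1 of \cite{Wen2014a}. The real divergence is the Laplace step. You expand numerator and denominator around their \emph{own} maximizers and then need a clustering lemma: with $p$ fixed and the Wald-type statistic bounded, the null, full, and all interpolating variance-component estimators (and both Laplace points) lie within $O(1/n)$ of one another, while $\log \BF(\Wv \mid \lambda,\tau)$ has $O(1)$ derivatives, so the conditional Bayes factor may be evaluated at any $(\check\lambda(\kappa),\check\tau(\kappa))$ without changing the leading term. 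The paper avoids comparing maximizers altogether: it splits the null quadratic form into a $\kappa$-weighted mixture of the null residual sum of squares and of (Wald statistic $+$ full residual sum of squares), and factors both integrands as $h\exp(g)$ with the \emph{same} exponent $g_a=g_0$, relegating the conditional-BF factor, the prior $p(\lambda,\tau)$, and $\exp\left(-\tfrac{\kappa}{2}\,\hat\bv'\hat\Vv^{-1}\hat\bv\right)$ to the $h$'s; both integrals are then expanded around the common maximizer of $g$ --- which is exactly how $\check\lambda(\kappa)$ and $\check\tau(\kappa)$ are \emph{defined}, via the profile objective $l(\lambda;\kappa)$ --- so the $\exp(g)$ and Hessian factors cancel identically in the ratio and $h_a/h_0$ collapses to \eqref{abf.exp}. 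What each buys: the paper's factorization device is cleaner (exact cancellation, no clustering lemma, and the $\kappa$-indexed family of estimators arises constructively rather than being taken as given), whereas your route makes explicit \emph{why} $\kappa$ is immaterial (insensitivity of the conditional BF to $O(1/n)$ perturbations of the variance components) and makes explicit the regime --- bounded signal or local alternatives --- in which the $O(1/n)$ claim is meaningful. The caveat you flag as the main obstacle applies equally to the paper's own proof, whose $h_a$ also contains the Wald exponent and is therefore $n$-independent only in that same regime; the paper leaves this implicit.
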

\noindent{\bf Remark 1.}  The approximate Bayes factors in the BLMM share the same functional form as the ABFs discussed in \cite{Wen2014a} and enjoy some of the computational properties discussed therein. In particular, the computation of the ABF is robust to the potential collinearity presented in the data matrix $\Gv$. Furthermore, $\Wv$ is allowed to be rank-deficient.  

\noindent{\bf Remark 2.} For single SNP analysis, i.e., $p=1$, both $\check \Wv$ and $\check \Vv$ degenerate to scalars (which we denote by $\check \omega$ and $\check v$, respectively). The expression of (\ref{abf.exp}) is reduced to 
\begin{equation} \label{simple.abf.exp}
   \ABF(\omega, \kappa) = \sqrt{\frac{\check v}{\check v + \check \omega}} \exp \left(\frac{1}{2} \frac{\check \omega}{\check v + \check \omega} \frac{\check \beta^2}{\check v} \right),
\end{equation}
which has the same functional form as the ABF discussed in \cite{Wakefield2009}. 

Although all suitable $\kappa$ values yield the same asymptotic error bound, they have practical implications on the approximation accuracy for finite samples. Our numerical experiments  (Appendix B) indicate that with sample size around hundreds, the $\ABF$s with $\kappa = 0 $ and $\kappa = 1$ both become quite accurate.

\subsection{Connection with frequentist test statistics}

\subsubsection{Connection with fixed effect test statistics}
Consider a specific class of prior, $\Wv = c \Vv$, for which the $\ABF$ can be simplified to
\[ \ABF(\Wv = c \Vv, \kappa) = \left(\sqrt{\frac{1}{c + 1}}\right)^p \exp \left(\frac{1}{2}\frac{c}{c+1} \check \bv' \check \Vv^{-1} \check \bv \right).\]
Consequently, the $\ABF$ becomes a monotonic transformation of the quadratic form $\check \bv' \check \Vv^{-1} \check \bv$. 
We note that, in the following two special cases, the quadratic form corresponds to some popular frequentist statistics to test $\bv$ as a {\em fixed} effect. Particularly, when $\kappa = 1$, the quadratic form becomes the (multivariate) Wald statistic $\hat \bv' \hat \Vv^{-1} \hat \bv$; as $\kappa$ is set to 0, it coincides with the Rao's score statistic (Appendix C.1). 

The monotonic correspondence between the $\ABF$ and these two popular frequentist test statistics indicates that, under the prior specified, the $\ABF$ ranks candidate models (or SNP associations in single-SNP analysis) exactly the same way as both the Wald statistic (for $\kappa=1$) and the score statistic (for $\kappa =0$). 
Furthermore, applying the strategy of Bayes/non-Bayes compromise \citep{Good1992, Servin2007} by treating the $\ABF$ as a regular test statistic, it becomes obvious that the $\ABF$ possesses a $p$-value {\em identical} to that of the corresponding Wald or score statistic, depending on the $\kappa$ values. 
\cite{Wakefield2009} first named the prior specification of the kind $\Wv = c \Vv$ as the {\em implicit p-value prior}.
In the special case of single SNP association testing and assuming Hardy-Weinberg equilibrium, it follows that $\Vv \propto \frac{1}{nf(1-f)}$ (where $f$ represents the allele frequency of a target SNP).
As a consequence, the implicit $p$-value prior essentially assumes a larger {\it a priori} effect for SNPs that are less informative (either due to a smaller sample size or minor allele frequency). Although, from the Bayesian point of view, there seems to be a lack of proper justification for such prior assumptions \citep{Wakefield2009, Wen2014b}, we  often note that the overall effect of the implicit $p$-value prior on the final inference may be negligible in practice, especially when the sample size is large (see section 5.1.1 for illustration). 

\subsubsection{Connection with the variance component score statistic}

In SNP set analysis, it has become common practice to construct a variance component score test for the genetic effect $\bv$  \citep{Wu2011, Lee2012, Schifano2012}. That is, for a set of $p$ SNPs, the genetic effects are assumed to be {\em random} and follow the distribution $\bv \sim {\rm N}( \bf{0},  \gamma \Mv)$,
where the matrix $\Mv$ is pre-defined. 
To test $H_0: \gamma = 0 $ vs. $H_1: \gamma \ne 0$, the score statistic is given by $T_{\rm score} =  \tilde \tau^2 (\yv-\Xv \tilde \av)'\tilde \Sigv^{-1} \Gv \Mv \Gv' \tilde \Sigv^{-1} (\yv-\Xv \tilde \av),$ where $\tilde \Sigv = \Iv + \tilde \lambda \Kv$.
In the special case that the random effect $\uv$ is ignored (i.e., $\lambda =0, \tilde \Sigv = \Iv$),  $T_{\rm score}$ is reduced to the form of the original SKAT statistic \citep{Wu2011}. 
By re-parameterizing $\Wv = \gamma \Mv$, we show that $\ABF(\kappa =0)$ can be represented as a function of $T_{\rm score}$ (Appendix C.2). 
In particular, as $\gamma \to 0$,  it follows that $$  \ABF(\Wv =  \gamma \Mv, \kappa=0) \approx \exp \left( \frac{\gamma}{2}\, T_{\rm score} \right).$$ 
That is, $\ABF(\kappa=0)$ becomes monotonic to the variance component score statistic. 
Interestingly, the condition $\gamma \to 0$ represents a {\em local alternative} scenario (i.e., $\bv$ only slight deviates from $\bf 0$), for which score tests are known to be most powerful.

\section{Genetic Association Analysis with Bayes Factors}

\subsection{Bayesian Hypothesis Testing}

Bayes factors present two major advantages in the hypothesis testing of genetic association signals: namely, the convenience of Bayesian model averaging and the flexibility of utilizing useful prior information.  Before we delve into the details of the advantages of Bayesian models in hypothesis testing, it is worth noting that the practical usage of Bayesian model comparison in hypothesis testing is limited, mostly due to the difficulty involved in determining significance thresholds based on Bayes factors. Traditionally, this issue has been addressed by treating a Bayes factor as a regular test statistic and deriving its $p$-value accordingly \citep{Good1992, Servin2007}. 
Because the null distribution of a Bayes factor is generally non-trivial, most practical implementations rely on permutation procedures. 
Recently, \cite{Wen2013} proposed a robust Bayesian false discovery rate (FDR) control procedure that directly uses the Bayes factors as inputs. This procedure ensures FDR control, even under the mis-specification of alternative models, a property resembling the behavior of $p$-value based procedures under similar circumstances. 
Most importantly, this procedure is highly computationally efficient and generally does not require extensive permutations. 

\subsubsection{Model Averaging}

In hypothesis testing, there often exist multiple alternative scenarios, and a single parametric model (or its corresponding test statistic) can hardly accommodate all cases. 
For example, in SNP set testing of rare-variant genetic associations, there exist two primary types of competing approaches that target different alternative scenarios. 
The first type, represented by the burden tests \citep{Madsen2009}, collapses the genetic variants in a region to form a single characteristic genetic unit, with respect to which the association test is then performed. 
This approach is ideal for a particular alternative scenario in which most of the variants considered are either {\em consistently} deleterious or {\em consistently} protective. 
The second type of the approach, represented by the C-alpha \citep{Neale2011} and SKAT tests, targets a complementary scenario in which the variants included in the SNP set can have bi-directional effects on the phenotype of interest. 
In practice, because the true alternative model is never known {\it a priori}, it remains a challenge to reconcile/combine the results from the two distinct approaches into the frequentist testing paradigm. 
Bayesian model averaging provides a principled way to naturally address this issue.
Suppose that there are $k$ possible alternative models in consideration, and for each model $i$, a Bayes factor $\BF_i$ can be computed and a prior probability/weight, $\pi_i$ is assigned. An overall Bayes factor then can be computed by $\overline \BF = \sum_{i=1}^k \pi_i \BF_i$, which summarizes the overall evidence from the data compared to the null model while accounting for the uncertainty of the true alternative scenario. 

In the context of SNP set analysis, \cite{Lee2012} showed that the alternative scenarios considered in the burden and SKAT tests can both be represented in the LMM framework with different specification of random effect $\Wv$ matrix.
In brief, let the column vector $\wv = (w_1,...,w_p)$ denote the marginal prior effect sizes for $p$  SNPs in a set. 
The burden test assumes $\Wv = \Wv_b = (\sqrt{\wv}) (\sqrt{\wv})'$, whereas the SKAT model assumes $ \Wv = \Wv_s = {\rm diag}(\wv)$. 
Given these results and within the framework of BLMM, we can straightforwardly average the evidence over the the two competing alternative models by computing an overall Bayes factor,  
$\overline \BF (\pi) = \pi \cdot \BF(\Wv_b) + (1- \pi) \cdot \BF(\Wv_s),$
where the probability $\pi$ denotes the relative prior frequency of the burden model. 
Without prior preference over the two alternatives, a natural ``objective" choice is to set  $\pi = 0.5$.

\cite{Lee2012} provided an alternative interpretation by connecting the two models. They considered a class of $\Wv$ matrices indexed by a non-negative correlation coefficient $\rho$: namely,
\begin{equation}
   \Wv_\rho =  {\rm diag}(\sqrt{\wv})\left[ (1-\rho) \Iv + \rho \lv \lv' \right] {\rm diag}(\sqrt{\wv}) = (1-\rho)\, \Wv_s + \rho \, \Wv_b, 
\end{equation}
which we will refer to as the SKAT-O prior. It should be noted that the prior distribution for $\bv$ assumed by Bayesian model averaging is essentially a normal mixture, which itself is not necessarily normal and hence differs from the SKAT-O prior. Nevertheless, the SKAT-O prior can be viewed as a normal approximation of this mixture distribution (to the first two moments). 

\subsubsection{Informative Prior}
The explicit specification of the prior distribution on $\bv$ for alternative models  is seemingly a distinct feature of Bayesian hypothesis testing. However, as we have shown, even the most commonly applied frequentist test statistics can be viewed as resulting from some implicit Bayesian priors. Therefore, it is only natural to regard the prior specification of $\bv$ as an integrative component in alternative modeling. This fact should encourage practitioners to explicitly formulate appropriate informative priors in Bayesian hypothesis testing: if the prior does capture some essence of reality, it improves the overall statistical power; even if the prior is mis-specified, testing with Bayes factors using the procedures, such as either the Bayes/Non-Bayes compromise or the robust Bayesian FDR control, only results in a reduction in power but no inflation of type I error.

For SNP set analysis, it has become common practice to pre-define some  ``weight" for each individual participating SNP in both the burden and SKAT types of approaches (i.e., the aforementioned $\wv$ vector). Most commonly, these priors are set up to prioritize genetic variants with low allele frequencies. 
When performing genetic association analysis, it is now becoming increasingly popular to incorporate genomic annotation and/or pathway information. In all of these examples, BLMM provides a convenient way to formally integrate the prior information into the hypothesis testing.       

Finally, we note that there exist practical settings, especially in the studies of genome-wide scale, in which the information of the desired priors can be sufficiently ``learned" from data facilitated by the Bayes factors. 
Take, for example, the problem of SNP set analysis with two competing alternatives, and consider inferring the weights of the burden and the SKAT models ($\pi$) from the data. 
Hypothetically, if (i) many SNP sets are investigated (in a single or multiple studies) and (ii) a sufficient amount of modest to strong signals are presented in the data,
it should be intuitive that $\pi$ can be accurately estimated by pooling the information across all SNP sets. 
More specifically, for each SNP set, we can augment a latent indicator to represent the true generative model of the observed data. Subsequently, a straightforward EM algorithm (where the complete data likelihood can be evaluated via Bayes factors) can be used to estimate $\pi$.

\subsection{Bayesian Variable Selection in the BLMM}\label{BVS.sec}

Beyond hypothesis testing, many practical problems in genetic association studies can be tackled using model comparison/selection techniques via Bayes factors. 
Here, we consider the problem of multi-locus fine-mapping analysis. In practice, the fine-mapping analysis usually focuses on relatively small genomic regions  flagged by SNP association signals, with the aim of identifying multiple potential signals and narrowing down the candidate causal variants within a region while accounting for LD. 

Consider a region of $p$ candidate variants whose genetic effects are jointly modeled by the $p$-vector $\bv$. Ultimately, we are interested in making an inference on the binary vector $\gav:= \left(\lv(\beta_1 \ne 0), \dots, \lv(\beta_p \ne 0) \right)$.
Under the BLMM, we assume the following spike-and-slab prior for variable selection, namely,
\begin{equation}
   \bv \mid \gav \sim  {\rm N}( {\bf 0}, \Wv) ~{\rm with}~ \Wv = \phi^2 \, {\rm diag}\left(\gav\right),~ {\rm and }~  \Pr\left(\gav\right) = \prod_{i=1}^p p_1^{\xi_i}(1-p_1)^{(1-\xi_i)},
\end{equation}
where the parameter $p_1$ denotes the prior inclusion probability  of a SNP and the parameter $\phi^2$ represents the prior genetic effect size of each SNP. The posterior distribution of $\gav$ can be computed by
\begin{equation}
   \Pr(\gav \mid \phi^2, \yv, \Xv, \Zv, \Gv) \propto \Pr(\gav) \cdot P(\yv \mid \gav, \phi^2, \Xv, \Zv, \Gv) \propto \Pr(\gav) \cdot \BF (\Wv),                                       
\end{equation}
where the Bayes factor can be further approximated by $\ABF(\Wv,\kappa)$. 
It is then conceptually straightforward to design an MCMC algorithm to perform Bayesian variable selection.
We note that, in the case of setting $\kappa = 0$, there are substantial computational savings in the proposed MCMC computation. We give the detailed description and explanation of the MCMC algorithm in Appendix D.

\section{Numerical Illustration}

\subsection{Application of BLMM to an \ath Data Set} 

In this example, we apply the BLMM to study the genetic associations between the genotypes of an inbred \ath line and the quantitative phenotype of sodium concentration in the leaves using the data described in \citep{Baxter2010}.
The data set consists of 336 inbred individuals, and each individual is genotyped at 214K SNP positions genome-wide. The data set was previously analyzed by \cite{Segura2012} under the LMM setting.
We conduct an additional quantile normalization step for the original phenotype measurements to prevent the influence of potential outliers.    

\subsubsection{Single SNP Association Analysis}

We first perform single SNP association tests using the approximate Bayes factors of the BLMM and compare the results with the analyses based on $p$-values. 
To specify the alternative models in the BLMM, we consider a natural exchangeable prior on the standardized effect scale, i.e.,  $\sqrt{\tau} \,\beta \sim {\rm N}(0, \phi^2)$. 
Unlike the implicit $p$-value prior, this prior does not assume a relationship between the genetic effect size and the features of a target SNP. 
Furthermore, instead of fixing a single $\phi$ value, we assume $\phi$ is uniformly drawn from the set $L:=\{\phi: 0.1, 0.2, 0.4, 0.8,1.6\}$, where the various levels of $\phi$ values cover a range of small, modest to large potential effect sizes. 
The use of multiple $\phi$ values forms a mixture normal prior, which is helpful for describing a longer-tailed distribution of effect sizes \citep{Servin2007, Wen2014a}. The range of the $\phi$ values is selected following the suggestion of \cite{Balding2009}.
We use the software package GEMMA \citep{Zhou2012} to estimate the kinship matrix, $\Kv$, for the random effect, and obtain the MLEs, $\hat \beta (\hat \lambda), \hat \beta(\tilde \lambda)$, along with their standard errors for all the SNPs. Applying the equation (\ref{simple.abf.exp}), we then compute the approximate Bayes factors at $\kappa=1$ and $\kappa=0$ for each $\phi_i$ value. Finally, we compute an overall Bayes factor by averaging over all the prior effect size models, i.e., $\BF = \frac{1}{||L||} \sum_i \BF(\phi_i)$.

We first investigate the ranking of the association signals by the ABFs under the natural Bayesian prior and the $p$-values based on the score and Wald test statistics. To this end, we compute the Spearman's rank correlation coefficient ($\rho$) of the $\log_{10}(\ABF)$ and $-\log_{10}(\mbox{$p$-value})$.
The overall rank correlation (from all 214K association tests) between $-\log_{10}(\mbox{$p$-value})$ based on the score statistic and $\log_{10}[\ABF(\kappa=0)]$ is 0.817. However, we note that the majority of the discordance in ranking comes from the unlikely association signals (see Figure \ref{abfvsp.fig}), which are generally not of interest. 
Focusing on the subset of 10,913 SNPs with $p$-value $< 0.05$, the rank correlation becomes nearly perfect ($\rho = 0.995$). 
Similarly, the $-\log_{10}(\mbox{$p$-value})$ based on the Wald statistic has an overall rank correlation of 0.821 with $\log_{10}[\ABF(\kappa=1)]$, and for the subset of 11,379 SNPs with corresponding $p$-value $< 0.05$, $\rho = 0.996$. The direct comparison between the approximate Bayes factors and corresponding $p$-values is shown in Figure \ref{abfvsp.fig}.

\begin{figure}[!ht]
\begin{center}
\includegraphics[totalheight=0.42\textheight]{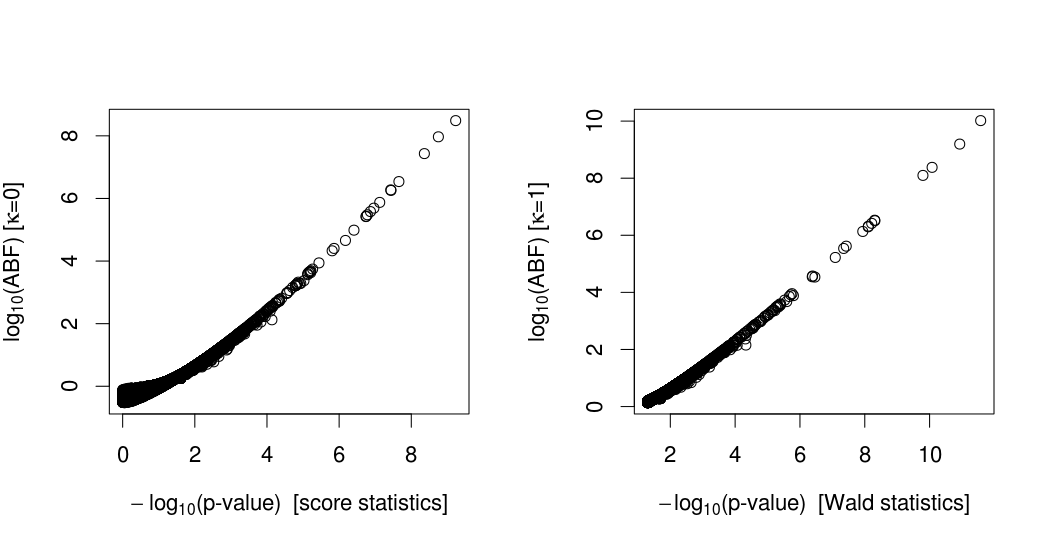}
\caption{\label{abfvsp.fig} Direct comparison of the $\ABF$s and $p$-values on  the log scale. The plot shows that the rankings of the association signals based on the Bayes factor and the $p$-value are largely in agreement, especially for SNPs showing modest to strong signs of association. }
 \end{center}
\end{figure}

As an illustration, we further apply the Bayesian and the frequentist FDR control procedures for the Bayes factors and $p$-values to determine the significance cut-offs, ignoring correlations among the tests. 
Ultimately, both the Benjamini-Hochberg and the Storey procedures using the score statistic $p$-values select 17 significant SNPs (denoted by set $S_p$).
In comparison, the standard Bonferroni procedure selects 12 SNP (denoted by set $S_b$). 
The Bayesian FDR control procedure (i.e., the EBF procedure, described in \cite{Wen2013}) based on $\ABF(\kappa =0)$ selects 14 significant SNPs (denoted by set $S_{BF}$).  
Importantly, we note that $S_b \subset S_{BF} \subset S_p$. The results from the $\ABF(\kappa=1)$ and Wald statistic $p$-values are nearly identical.  

Based on this result, we conclude that, under this particular GWAS setting with a very modest sample size, there is no obvious practical difference in applying the Bayes factors and the $p$-values in single SNP hypothesis testing. We view this result as a numerical validation of our theoretical results discussed in section 3.1.

\subsubsection{Fine-Mapping Analysis}

Following \cite{Segura2012}, we further perform a multi-locus fine-mapping analysis of a 200kb genomic region centered around the top single SNP association signal at chr4:6392280, where 508 SNPs are included. 
Using the MCMC algorithm described in section \ref{BVS.sec}, we assign the prior inclusion probability  $p_1 = 1/508$ for each candidate SNP, which conservatively sets the prior expected number of signals in the region to 1. Conditional on a SNP having a non-zero effect (i.e., $\lv(\beta_i) \ne 0$), we use the same normal mixture prior for the effect size $\beta_i$ described in the single SNP association analysis.
We obtain the posterior samples from 300,000 MCMC repeats after 150,000 burn-in steps, and the convergence of the MCMC algorithm is diagnosed using the procedure described in \cite{Brooks2003}. 

The analysis based on the posterior samples clearly indicates that there are multiple independent association signals residing in this relatively small genomic region.   There is zero probability mass on those posterior models containing fewer than 3 SNPs; the probabilities for the posterior models having 3, 4, 5 and 6 independent signals are 0.175, 0.452, 0.350 and 0.023, respectively. 
Inspecting individual SNPs, we summarize the top five associated SNPs according to their posterior inclusion probabilities in Table \ref{post.incp.tab}. 
The correlations among the top 5 SNPs are very modest. 
Thus far, our result has been largely consistent with what is reported in \cite{Segura2012}, in which a stepwise variable selection scheme with a BIC-like model selection criteria is employed. 
Nevertheless, we notice  a great deal of uncertainty within the individual models from our analysis. 
The details of the top 10 models ranked by their posterior probabilities are shown in Table \ref{post.model.tab}.
The {\it maximum a posterior} (MAP) model only has a probability of 0.05, and all of the top models have similar complexities and very comparable likelihoods.
In addition, we find that 61\% of the posterior models contain both of the top two SNPs, and 32\% of the posterior models contain a combination of the top three SNPs.
One may naturally suspect that the uncertainty in relative large models (i.e., with more SNPs included) is partially due to the stringent $p_1$ prior. To this end, we modify the prior distribution to $\log_{10} p_1\sim {\rm Uniform}[-2.71, -1.40]$ (the two end points correspond to $p_1$ equaling $1/508$ and $20/508$, respectively), but the results do not qualitatively change. 
Biologically, it might be the case that the true causal variants are not directly genotyped and the observed signals are only partially correlated with them. It is then worth following up with dense genotyping experiments or genotype imputations. 
Statistically, it seems evident that, in this particular case, reporting a single ``best" model from the variable selection procedure yields an over-simplified picture and can be misleading for the follow-up analysis.

\begin{table}[!ht]
\begin{center}
{\fontsize{10}{12}\selectfont 
\begin{tabular} { c    c     c }
\hline
SNP & Posterior Inclusion Prob. & Marginal $\log_{10}(\ABF)$ \\
\hline
\hline
chr4:6414956 & 0.795 & 4.98 \\
chr4:6392280 & 0.741 & 7.96 \\
chr4:6420777 & 0.528 & 6.03 \\
chr4:6455695 & 0.451 & 5.30 \\
chr4:6391204 & 0.405 & 7.92 \\
\hline
\end{tabular}}
\caption{\label{post.incp.tab} Top 5 associated SNPs according to their marginal inclusion probabilities in the Bayesian fine-mapping analysis. The last column shows the values of $\log_{10} \ABF(\kappa=0)$ from the single SNP association testing. Only SNP chr4:6392280 and SNP chr4:6391204 show a very modest LD, whereas all of the other pairs of SNPs are very weak in LD. }

\end{center}
\end{table}

\begin{table}
\begin{center}{\fontsize{10}{12}\selectfont 
\begin{tabular} { l    c     c }
\hline

Model  & Posterior  Prob. &  $\log_{10}(\ABF)$ \\
\hline
\hline
 chr4:6392280 + chr4:6394774 + chr4:6414956 + chr4:6421034     & 0.052   &   19.55 \\
 chr4:6392280 + chr4:6414956 + chr4:6420777 + chr4:6455695     & 0.039   &   18.89 \\
 chr4:6391204 + chr4:6392280 + chr4:6414956 + chr4:6420777     &  0.032  &   18.47 \\    
 chr4:6380552 + chr4:6391204 + chr4:6414956 + chr4:6455695     &  0.028  &   18.67 \\     
 chr4:6391204 + chr4:6414956 + chr4:6420777 + chr4:6455695     & 0.026   &   18.72 \\     
 chr4:6392280 + chr4:6414956 + chr4:6418442 + chr4:6420777     &  0.024  &   18.68 \\    
 chr4:6391286 + chr4:6392280 + chr4:6414956 + chr4:6420777     &  0.022  &   18.35 \\     
 chr4:6392280 + chr4:6414956 + chr4:6418442                                &  0.018  &   16.75 \\   
 chr4:6380552 + chr4:6391204 + chr4:6392280 + chr4:6420777     &   0.017  &   18.19 \\    
 chr4:6380552 + chr4.6392280 + chr4:6394774 + chr4:6414956 + chr4:6421034 & 0.016 & 21.25 \\
\hline
\end{tabular}}
\caption{\label{post.model.tab} Top 10 posterior models in the Bayesian fine-mapping analysis. The models are ranked according to their posterior probabilities (second column). The last column shows the values of $\log_{10} \ABF(\kappa=0)$ of the corresponding models. Our prior specification encourages sparse models: complicated models with more predictors are penalized more severely by the prior inclusion probability. The most important feature of these results is that there is not a unique simple model that is clearly better than the others. }
\end{center}
\end{table}

\subsection{Simulation Study of SNP Set Analysis}
In this section, we perform simulation studies to illustrate the effectiveness of the proposed Bayesian model comparison approach in SNP set analysis. In each simulated data set, we generate 5,000 phenotype-SNP set pairs that mimics the data structure from genome-wide investigation of expression quantitative trait loci (eQTLs). We randomly select 3,500 SNP sets and simulate their phenotypes from a null model. For the remaining SNP sets, we use two types of alternative models described in \cite{Lee2012} to generate their phenotypes: one model assumes consistent directional effects of rare variants, whereas the other allows inconsistent directional effects. We use $\pi$ to denote the relative frequency of the sign-consistent models in all the alternative models, and vary this parameter in different simulation sets. We give a detailed account of the simulation schemes in Appendix E.1.

We analyzed the simulated data sets using the proposed Bayesian model comparison approach and the SKAT-O method implemented in the R package SKAT (version 0.95) to examine their controls of FDR and powers. 
For both approaches, we again follow the previous work \citep{Wu2011,Lee2012} and assign the marginal weight for each SNP as a function of their allele frequencies.
In Bayesian analysis, we apply two strategies in choosing the prior weights for Bayes factor computation. 
The first strategy assumes an ``objective" uniform prior setting $\pi = 0.5$, and the second strategy {\em estimates} $\pi$ and the distribution of genetic effect sizes  from the data by pooling information across all phenotype-SNP set pairs using a hierarchical model. The details of the analysis procedure are provided in Appendix E.2.

We summarize the simulation results in Table \ref{set.sim.tbl}. The false discovery rates in all the methods are well controlled. The performance of the Bayesian procedure with the default uniform prior weights is very similar to that of the SKAT-O, and the Bayesian procedure based on informative priors achieves the best power in  all settings defined by different true $\pi$ values. 
These results are well expected because the Bayesian method with estimated weights has the unique advantage of effectively borrowing information across genes through the use of Bayes factors and hierarchical modeling. 
In addition, we want to emphasize that all of the Bayesian models assumed in the analysis are indeed very ``wrong" comparing to the true data-generating model; nevertheless, the robust Bayesian FDR control procedure using Bayes factors ensures the targeted FDR level.

\begin{table}
\begin{center}
{\fontsize{10}{12}\selectfont 
\begin{tabular} { c  c   c c c  c  c c c }
\hline

   ~ & ~& \multicolumn{3}{c}{FDR} &~& \multicolumn{3}{c}{Power}\\
\cline{3-5} \cline{7-9}
 Setting ($\pi$) & ~& SKAT-O & Bayesian-D & Bayesian-E & ~ & SKAT-O &  Bayesian-D & Bayesian-E \\
\hline
\hline
 0.20 & ~ &  0.024 & 0.027 & 0.023 & ~ & 0.768 & 0.741 & 0.821 \\
 0.40 & ~ &  0.046 & 0.030 & 0.028 & ~ & 0.791 & 0.773 & 0.828 \\
 0.50 & ~ &  0.051 & 0.045 & 0.041 & ~ & 0.836 & 0.825 & 0.869 \\
 0.60 & ~ &  0.049 & 0.046 & 0.045 & ~ & 0.909 & 0.908 & 0.919 \\
 0.80 & ~ &  0.050 & 0.049 & 0.048 & ~ & 0.933 & 0.943 & 0.948 \\
\hline

\end{tabular}}
\caption{\label{set.sim.tbl} Realized false discovery rate and power in simulation studies of SNP set analysis. The first column (setting) indicates the percentage of the SNP sets with sign-consistent effects among all of the non-null SNP sets in the simulated data. For the SKAT-O procedure, the resulting $p$-values are further processed by the Storey procedure for FDR controls. ``Bayesian-D" indicates the Bayesian testing procedure with the default uniform weights. "Bayesian-E" indicates the Bayesian procedure that estimates $\pi$ from the data. The FDR control for the Bayes factors is performed using the EBF procedure described in \cite{Wen2013}.   }

\end{center}
\end{table}

Going beyond SNP set testing targeting rare variant associations, in Appendix F, we further demonstrate that our Bayesian model averaging framework can be conveniently extended to integrate models for detecting  {\em common  variant associations} into SNP set testing. We envision that this approach will have a profound impact in  studies of expression trait quantitative loci at genome-wide scale. 
 
\section{Discussion}

In this paper, we have presented a unified Bayesian framework to perform model comparisons in the contexts of a linear mixed model and SNP set analysis. 
Although our statistical results are presented exclusively for the quantitative response variables, it is possible to extend them to the generalized linear mixed models (GLMM) context to incorporate binary outcomes and count data using a quadratic approximation of the corresponding log-likelihood functions. 

Primarily based on the results of the approximate Bayes factors, we have demonstrated an efficient Bayesian sparse variable selection algorithm to perform multi-locus association analysis using the BLMM. 
Recently, \cite{Zhou2013} also proposed an elegant Bayesian solution for multiple SNP association analysis under the LMM model on the genome-wide scale. 
It should be noted that their method also has a primary focus on estimating the  heritability, whereas our method is designed for fine-mapping analysis. 
In addition, by treating SNP sets as selection units, our approach can be straightforwardly extended to the identification of multiple associated genes/SNP sets, which may be attractive for biological pathway analysis. 
Previous studies \citep{Guan2011, Wen2014a} have shown that Bayesian methods generally hold advantages over penalized regression approaches in variable selection problems with correlated covariates (e.g., SNPs in LD) and/or non-i.i.d. residual error structures. 
More importantly, as we have demonstrated, there can be great uncertainty regarding any single ``best fitting" model. As a practical consequence, reporting a single ``best" model but ignoring appropriate uncertainty assessments could hinder follow-up scientific investigation. 

Finally, we want to note that Bayesian model comparison approaches have been successfully assessed in other areas of genetic association studies, e.g., meta-analysis \citep{Wen2014b}, association mapping of multiple-traits and detecting gene-environment interactions \citep{Flutre2013, Wen2014b}. Our results can be conveniently integrated into those existing tools, and their usages can be naturally extended to incorporate LMM and SNP set analysis.

\section{Supplementary Material}

The software, scripts used to generate simulated data can be found at \url{http://github.com/xqwen/BLMM}. Detailed derivations, proofs and descriptions of relevant algorithms and simulation details are included in the supplementary file.

\section*{Acknowledgments}

We thank Seunggeun Lee and Xiang Zhou for helpful discussions. This work is supported by NIH grants R01-MH101825 and R01-HG007022. 

\newpage

\appendix

\section{Bayes Factor Derivation}

In this section, we show the detailed derivation of the approximate Bayes factors under the BLMM, which also serves as a proof for Proposition 1 in the main text. 

\subsection{Exact Bayes factor with known $\lambda$ and $\tau$}
We first consider the case where the variance parameters $\tau$ and $\lambda$ are known, instead of being assigned priors. In this case, we show that the exact Bayes factor under the BLMM can be analytically computed. We summarize this result in the following lemma:

\begin{lemma}
Under the BLMM, if the variance parameters $\tau$ and $\lambda$ are known, the Bayes factor can be analytically computed by
\begin{equation}\label{exact.bf}
   \BF(\Wv) = | \Iv + \hat \Vv^{-1} \Wv |^{-\frac{1}{2}} \cdot \exp \left( \frac{1}{2} \hat \bv'  \hat \Vv^{-1} \left[ \Wv ( \Iv +  \hat \Vv^{-1} \Wv)^{-1} \right] \hat \Vv^{-1} \hat \bv \right).
\end{equation}
\end{lemma}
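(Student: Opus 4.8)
The plan is to reduce the problem to a standard conjugate Gaussian calculation by successively integrating out the nuisance pieces of the model. First I would integrate out the random effect $\uv$. Since $\lambda$ and $\tau$ are fixed, combining (\ref{lmm}) and (\ref{re.def}) gives the marginal Gaussian likelihood
\[
  \yv \mid \av, \bv \sim {\rm N}\!\left(\Xv\av + \Gv\bv,\ \tau^{-1}\Sigv\right), \qquad \Sigv = \Iv + \lambda\Kv,
\]
where $\Sigv$ is a known positive-definite matrix. A whitening transformation by $\Sigv^{-1/2}$ turns this into an ordinary Gaussian linear model with i.i.d. errors of variance $\tau^{-1}$, so that all of the standard least-squares algebra applies with the $\Sigv^{-1}$ inner product in place of the Euclidean one.

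Next I would eliminate the fixed effects $\av$. Writing out both the numerator $P(\yv \mid \Wv)$ and the denominator $P(\yv \mid \Wv = \bf{0})$ of (\ref{bf.def}) as Gaussian integrals over $\av$ (against the prior ${\rm N}(\bf{0},\Psiv)$) and over $\bv$, I would carry out the $\av$-integral first. The key fact is that, once $\av$ is marginalised out under the flat-prior limit $\Psiv^{-1}\to \bf{0}$, the remaining likelihood is exactly Gaussian in $\bv$,
\[
  L(\bv) \ \propto\ \exp\!\left(-\tfrac{1}{2}(\bv - \hat\bv)'\hat\Vv^{-1}(\bv-\hat\bv)\right),
\]
with $\hat\bv$ the MLE of $\bv$ and $\hat\Vv = {\rm Var}(\hat\bv)$ its covariance. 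This is the standard statement that the profile likelihood of a sub-vector of coefficients in a Gaussian linear model is Gaussian, centred at the MLE with precision equal to the inverse of its sampling variance. The divergent normalising constant produced by the $\Psiv^{-1}\to\bf{0}$ limit is identical in the numerator and the denominator and therefore cancels in the ratio, which is precisely why the Bayes factor remains well defined.

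With $\av$ removed, the computation collapses to a one-line conjugate Gaussian marginalisation: treating $\hat\bv$ as data with $\hat\bv \mid \bv \sim {\rm N}(\bv, \hat\Vv)$ and prior $\bv \sim {\rm N}(\bf{0},\Wv)$, the alternative model gives $\hat\bv \sim {\rm N}(\bf{0}, \hat\Vv + \Wv)$ while the null ($\Wv = \bf{0}$) gives $\hat\bv \sim {\rm N}(\bf{0}, \hat\Vv)$. Hence
\[
  \BF(\Wv) = \frac{|\hat\Vv + \Wv|^{-1/2}\exp\!\left(-\tfrac{1}{2}\hat\bv'(\hat\Vv+\Wv)^{-1}\hat\bv\right)}{|\hat\Vv|^{-1/2}\exp\!\left(-\tfrac{1}{2}\hat\bv'\hat\Vv^{-1}\hat\bv\right)}.
\]
Finally I would simplify: the determinant ratio equals $|\Iv + \hat\Vv^{-1}\Wv|^{-1/2}$ after factoring $\hat\Vv + \Wv = \hat\Vv(\Iv + \hat\Vv^{-1}\Wv)$, and the exponent follows from the identity $\hat\Vv^{-1} - (\hat\Vv+\Wv)^{-1} = \hat\Vv^{-1}\Wv(\Iv + \hat\Vv^{-1}\Wv)^{-1}\hat\Vv^{-1}$ (using that $\hat\Vv^{-1}\Wv$ commutes with $(\Iv + \hat\Vv^{-1}\Wv)^{-1}$), which yields exactly (\ref{exact.bf}).

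The main obstacle I anticipate is the second step: making rigorous the claim that the flat-prior marginalisation over $\av$ both (i) reduces the $\bv$-likelihood to the exact Gaussian form above and (ii) contributes a divergent factor that cancels cleanly between the two models. Getting the bookkeeping of the $\Psiv^{-1}\to\bf{0}$ limit right — ideally via an explicit orthogonalisation of the columns of $\Gv$ against those of $\Xv$ in the $\Sigv^{-1}$ metric, or via a block-matrix inversion of the joint $(\av,\bv)$ information matrix — is where the real work lies; everything after that is routine Gaussian algebra.
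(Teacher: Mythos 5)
Your proof is correct, and after the shared opening move it takes a genuinely different route from the paper's. Both arguments begin identically: integrate out $\uv$ so the errors have covariance $\tau^{-1}\Sigv$ with $\Sigv = \Iv + \lambda \Kv$, then whiten by $\Sigv^{-1/2}$ to reduce to an ordinary Gaussian linear model with i.i.d.\ errors. At that point the paper stops doing work: it invokes Lemma 1 of \cite{Wen2014a} as a black box to supply the closed-form Bayes factor for the transformed model, and never carries out the marginalization itself. You instead give a self-contained derivation: marginalize $\av$ under the flat-prior limit, use the exact decomposition of the Gaussian quadratic form (block inversion / Schur complement of the joint $(\av,\bv)$ information matrix in the $\Sigv^{-1}$ metric) to conclude that the remaining $\bv$-likelihood is \emph{exactly} proportional to ${\rm N}(\hat \bv ; \bv, \hat \Vv)$ --- an identity here, not an asymptotic statement, since the variance is known --- and then the conjugate step $\int {\rm N}(\hat \bv; \bv, \hat \Vv)\,{\rm N}(\bv; {\bf 0}, \Wv)\, d\bv = {\rm N}(\hat \bv; {\bf 0}, \hat \Vv + \Wv)$ exhibits the Bayes factor as a ratio of two normal densities, after which the factorization $\hat \Vv + \Wv = \hat \Vv(\Iv + \hat \Vv^{-1}\Wv)$ and the identity $\hat \Vv^{-1} - (\hat \Vv + \Wv)^{-1} = \hat \Vv^{-1}\Wv(\Iv + \hat \Vv^{-1}\Wv)^{-1}\hat \Vv^{-1}$ give exactly the stated formula. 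What the paper's route buys is brevity and reuse of machinery that it needs anyway for more general model systems; what your route buys is a proof readable without the external reference, and it makes transparent precisely where the flat-prior limit enters and why it cancels: the $|\Psiv|^{-1/2}$ prior constants and the $|\Xv'\Sigv^{-1}\Xv|^{-1/2}$ factors from the $\av$-integral are identical under the null and the alternative because neither depends on $\Wv$ or $\bv$. The step you flag as the ``real work'' does go through exactly as you anticipate; indeed the paper's Appendix A.1 records the resulting objects, $\hat \bv(\lambda)$ and $\hat \Vv(\lambda,\tau) = \tau^{-1}(\Gv_{\xv}'\Gv_{\xv})^{-1}$ with $\Gv_{\xv}$ the $\Sigv^{-1}$-metric residual of $\Gv$ on $\Xv$, which is the same Schur-complement bookkeeping you propose.
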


\begin{proof}
The linear mixed model can be equivalently represented by
\begin{equation}
   \label{full.model}
   \begin{aligned}
  \yv &= \Xv \av + \Gv \bv  + \epiv, \\
  \epiv &\sim {\rm N}\left(0, \tau^{-1} \Sigv \right),
  \end{aligned}
\end{equation}
where $\Sigv = \Iv + \lambda \Kv $. With the variance parameters and $\Sigv$ known, we perform the following transformations to the observed data:
\begin{equation}
  \begin{aligned}
   & \utilde{\yv} = \Sigv^{-\frac{1}{2}}\yv \\
   & \utilde{\Xv} = \Sigv^{-\frac{1}{2}}\Xv \\
   & \utilde{\Gv} = \Sigv^{-\frac{1}{2}}\Gv \\
  \end{aligned}
\end{equation}  
This results in a linear model
\begin{equation}
   \label{trans.lmm}
    \begin{aligned}
  \utilde{\yv} &= \utilde{\Xv} \av + \utilde{\Gv} \bv  + \utilde{\epiv}, \\
  \utilde{\epiv} &\sim {\rm N}\left(0, \tau^{-1} \Iv \right),
  \end{aligned}
\end{equation}
where $\utilde{\epiv} = \Sigma^{-\frac{1}{2}} \epiv$. Linear model (\ref{trans.lmm}) is a trivial special case of the complex linear model systems considered by \cite{Wen2014a}. Consequently, it follows from the Lemma 1 of \cite{Wen2014a}, given the prior specifications described in the main text, the Bayes factor can be analytically computed by 
\begin{equation}
  \BF(\Wv; \lambda, \tau) = | \Iv + \hat \Vv^{-1}\Wv|^{-\frac{1}{2}}
                  \cdot \exp \left( \frac{1}{2} \hbvg' \hat \Vv^{-1} \left[\Wv (\Iv + \hat \Vv^{-1}\Wv)^{-1}\right] \hat \Vv^{-1} \hbvg \right).
\end{equation}

\end{proof}

Next, we show the detailed analytic forms of $\hbvg$ and $\hat \Vv$ under the BLMM. 
First, we define 
\begin{equation}
  \Gv_{\xv} = \left(\Iv - \Sigv^{-1/2} \Xv \left(\Xv'\Sigv^{-1}\Xv\right)^{-1} \Xv' \Sigv^{-1/2}\right) \Sigv^{-1/2} \Gv,
\end{equation}
which only depends on $\lambda$ through $\Sigv$.
It follows that
\begin{equation}
  \hbvg(\lambda) = \left(\Gv_{\xv}'\Gv_{\xv}\right)\Gv_{\xv}'\Sigv^{-1/2} \yv,
\end{equation}
and
\begin{equation}
  \hat \Vv(\lambda, \tau) = \tau^{-1} \left(\Gv_{\xv}'\Gv_{\xv}\right)^{-1}.
\end{equation}

\subsection{Approximate Bayes factors for unknown $\lambda$ and $\tau$}

When $\lambda$ and $\tau$ are unknown, to compute the Bayes factor, it is required to evaluate the following marginal likelihood
\begin{equation}
  p(\yv \mid \Wv, \Xv, \Gv, \Zv) = \int p(\yv \mid \Wv, \Xv, \Gv, \Zv, \lambda, \tau) p(\lambda, \tau) d\lambda d\tau,
\end{equation}
and the desired Bayes factor is therefore computed as
\begin{equation}
  \BF(\Wv) = \lim_{\Psiv^{-1} \to 0} \frac{ \int p(\yv \mid \Wv, \Xv, \Gv, \Zv, \lambda, \tau) p(\lambda, \tau) d\lambda d\tau }{ \int p(\yv \mid \Wv = 0, \Xv, \Gv, \Zv, \lambda, \tau) p(\lambda, \tau) d\lambda d\tau}.
\end{equation}
By applying the Bounded convergence theorem (to switch limit and integration), we can carry the analytic computation up to the following point 
  \begin{equation}
   {\rm BF}(\Wv) = \frac{\int K_{H_a} \,d \lambda d \tau }{\int K_{H_0} \,d \lambda \,d \tau},
\end{equation}
where 
\begin{equation}
 \begin{aligned}
  K_{H_a} &= |\Iv + \hat \Vv(\tau, \lambda)^{-1} \Wv(\tau, \lambda)|^{-\frac{1}{2}}  \\ 
                   & \cdot \exp\left(\frac{1}{2}\hbvg(\lambda)'\hat \Vv(\tau, \lambda)^{-1} \Wv(\tau, \lambda)\left[ \Iv +  \hat \Vv(\tau, \lambda)^{-1} \Wv(\tau, \lambda)\right]^{-1} \Vv(\tau, \lambda)^{-1} \hbvg(\lambda)\right)  \\
                   & \cdot \tau^{\frac{n}{2}} |\Sigv(\lambda)|^{-\frac{1}{2}}\, p(\lambda, \tau)\cdot \exp\left(-\frac{\tau}{2}\left[\yv -\Xv \tilde \av (\lambda) \right]' \Sigv(\lambda)^{-1} \left[\yv -\Xv \tilde \av (\lambda) \right] \right),
  \end{aligned}
\end{equation}
and 
\begin{equation}
  K_{H_0} =\tau^{\frac{n}{2}}|\Sigv(\lambda)|^{-\frac{1}{2}}\, p(\lambda, \tau) \cdot \exp\left(-\frac{\tau}{2}\left[\yv -\Xv \tilde \av (\lambda) \right]' \Sigv(\lambda)^{-1} \left[\yv -\Xv \tilde \av (\lambda) \right] \right),
\end{equation}
where 
\begin{equation} \label{av.exp}
  \tilde \av(\lambda) = (\Xv'\Sigv^{-1}\Xv)^{-1} \Xv'\Sigv^{-1} \yv.
\end{equation}

We propose to approximate the double integrals of both $K_{H_a}$ and $K_{H_0}$ by Laplace's method. In general, Laplace's method approximates a multiple integral with respect to a $p$-vector $\zv$ in the following fashion,
\begin{equation} \label{lap.appx}
  \int_D h(\zv)\exp\left[g(\zv)\right] d\zv \approx (2 \pi)^{p/2} |\Hv_{\hat \zv}| h(\hat \zv) \exp\left[g(\hat \zv)\right],
\end{equation} 
where 
\begin{equation*}
 \hat \zv = \arg\max_{\zv} g(\zv), 
\end{equation*}
and $|\Hv_{\hat \zv}|$ is the absolute value of the determinant of the Hessian matrix of the function $g$ evaluated at $\hat \zv$. There may be multiple choices to factor an integrand into functions $h$ and $g$, the technical requirements for a valid asymptotic approximation are 
\begin{enumerate}
\item $h$ is smooth and positively valued
\item $g$ is smooth and obtains its unique maximum (w.r.t $\zv$) in the interior of $D$
\item $g$ is linear increasing with respect to the sample size $n$
\end{enumerate}
Different factorization schemes satisfying above requirements usually yield different approximation accuracies for finite sample size, nonetheless, their asymptotic error bounds are the same. For a detailed discussion, see \cite{Butler2007} chapter 2.

We apply a specific factorization for $K_{H_a}$ and $K_{H_0}$ for Laplace's method. First, we note the decomposition of the quadratic form 
\begin{equation}
\begin{aligned}
         &\tau \left[\yv -\Xv \tilde \av (\lambda) \right]' \Sigv(\lambda)^{-1} \left[\yv -\Xv \tilde \av (\lambda) \right]  \\
          &=\hbvg(\lambda) \hat \Vv(\tau, \lambda)^{-1} \hbvg(\lambda)+ \tau \left[\yv -\Xv \hat \av (\lambda) -\Gv \hbvg(\lambda) \right]' \Sigv(\lambda)^{-1}  \left[\yv -\Xv \hat \av (\lambda) -\Gv \hbvg(\lambda) \right],
\end{aligned}  
\end{equation}
where
\begin{equation}
   \left( \begin{array}{c} \hat \av \\  \hbvg \\ \end{array}\right) =   \left[(\Xv~\,\Gv)'\Sigv^{-1}(\Xv~\,\Gv)\right]^{-1} (\Xv~\,\Gv)'\Sigv^{-1} \yv.
\end{equation}
Thus, for an arbitrary weight parameter $\kappa \in [0,1]$, we can write 
\begin{equation}
\begin{aligned}
  &~~~~\tau \left[\yv -\Xv \tilde \av (\lambda) \right]' \Sigv(\lambda)^{-1} \left[\yv -\Xv \tilde \av (\lambda) \right]\\
   &= \kappa \cdot \left( \hbvg(\lambda) \hat \Vv(\tau, \lambda)^{-1} \hbvg(\lambda)+
\tau \left[\yv -\Xv \hat \av (\lambda) -\Gv \hbvg(\lambda) \right]' \Sigv(\lambda)^{-1}  \left[\yv -\Xv \hat \av (\lambda) -\Gv \hbvg(\lambda) \right] \right)\\
   & ~~ + (1-\kappa)\cdot \tau \left[\yv -\Xv \tilde \av (\lambda) \right]' \Sigv(\lambda)^{-1} \left[\yv -\Xv \tilde \av (\lambda) \right]
\end{aligned}
\end{equation}
Using this decomposition, we factor $K_{H_a}$ into $K_{H_a} = h_a(\lambda, \tau) \exp [g_a(\lambda, \tau)]$, where 
\begin{equation}
\begin{aligned}
  &~~h_a(\lambda, \tau)  = |\Iv + \hat \Vv(\tau, \lambda)^{-1} \Wv(\tau, \lambda)|^{-\frac{1}{2}}  \\ 
                   & \cdot \exp\left(\frac{1}{2}\hbvg(\lambda)'\hat \Vv(\tau, \lambda)^{-1} \Wv(\tau, \lambda)\left[ \Iv +  \hat \Vv(\tau, \lambda)^{-1} \Wv(\tau, \lambda)\right]^{-1} \Vv(\tau, \lambda)^{-1} \hbvg(\lambda)\right)\\
            &\cdot \exp\left( -\frac{\kappa}{2}  \hbvg(\lambda) \hat \Vv(\tau, \lambda)^{-1} \hbvg(\lambda) \right) \cdot p(\lambda, \tau)      
 \end{aligned}
\end{equation} 
and
\begin{equation}
\begin{aligned}
 g_a(\lambda, \tau) &= \frac{n}{2} \log(\tau) - \frac{1}{2} \log|\Sigv(\lambda)| - \frac{\tau}{2} (1-\kappa) \left(\left[\yv -\Xv \tilde \av (\lambda) \right]' \Sigv(\lambda)^{-1} \left[\yv -\Xv \tilde \av (\lambda) \right]\right) \\
    & ~ - \frac{\tau}{2} \kappa   \left( \left[\yv -\Xv \hat \av (\lambda) -\Gv \hbvg(\lambda) \right]' \Sigv(\lambda)^{-1}  \left[\yv -\Xv \hat \av (\lambda) -\Gv \hbvg(\lambda) \right] \right).
 \end{aligned}  
\end{equation}
We factorize into $K_{H_0} = h_0(\lambda, \tau) \exp [g_0(\lambda, \tau)]$, where  
\begin{equation}
  h_0(\lambda, \tau) = \exp\left(-\frac{\kappa}{2}  \hbvg(\lambda) \hat \Vv(\tau, \lambda)^{-1} \hbvg(\lambda) \right) \cdot p(\lambda, \tau),      
\end{equation}
and $g_0(\lambda, \tau) = g_a(\lambda, \tau)$.

Note that for any given $\lambda$ value, there is a corresponding $\tau$ value, namely,
\begin{equation}
  \begin{aligned}
   \hat \tau(\lambda; \kappa) &= n \bigg / \bigg\{ (1-\kappa) [\yv -\Xv \tilde \av (\lambda)]' \Sigv(\lambda)^{-1} [\yv -\Xv \tilde \av (\lambda) ] \\
                       & + \kappa [\yv -\Xv \hat \av (\lambda) -\Gv \hbvg(\lambda) ]' \Sigv(\lambda)^{-1} [\yv -\Xv \hat \av (\lambda) -\Gv \hbvg(\lambda)] \bigg\},
  \end{aligned}
\end{equation}
maximizes the $g_a(\lambda, \tau)$ among all possible $\tau$ values. Consequently, maximizing function $g_a(\lambda, \tau)$ is equivalent to maximize $g_a(\lambda, \hat \tau(\lambda))$ with respect to the single parameter $\lambda$. Therefore, we can simplify the target objective function to
\begin{equation}
 \label{obj.func}	
  l(\lambda;\kappa) = \frac{n}{2}\log \hat \tau(\lambda) -\frac{1}{2} \log |\Sigv(\lambda)|.
\end{equation} 
It should be noted that as in the special cases $\kappa=1$ and $\kappa =0$, the objective function (\ref{obj.func}) becomes the score functions of the full and null LMMs, respectively. In general, there is no strong guarantee that the function (\ref{obj.func}) is strictly concave with respect to $\lambda$. Nevertheless, the second derivative of $l(\lambda; \kappa)$ (not shown, see \cite{Zhou2012} for reference) suggests that the objective function is asymptotically concave (i.e., concave for sufficiently large sample size $n$).
There is no analytic solution to optimize (\ref{obj.func}), and the gradient based numerical optimization algorithms, e.g. the Newton-Raphson method, are typically applied in this setting (because the derivatives of the objective functions can be efficiently evaluated, as demonstrated in \cite{Zhou2012}).    
We denote
\begin{equation}
   \check \lambda (\kappa) = \arg\max_\lambda\, l(\lambda;\kappa),
\end{equation}
and 
\begin{equation}
    \check \tau = \hat \tau(\check \lambda).
\end{equation}
Based on (\ref{lap.appx}), Laplace's method yields the following approximation to the Bayes factor
\begin{equation}
  \begin{aligned}
  \BF(\Wv)  & = |\Iv + \hat \Vv(\check \tau, \check \lambda)^{-1} \Wv(\check \tau, \check \lambda)|^{-\frac{1}{2}}  \\ 
                   & \cdot \exp\left(\frac{1}{2}\hbvg(\check \lambda)'\hat \Vv(\check \tau, \check \lambda)^{-1} \Wv( \check \tau, \check \lambda)\left[ \Iv +  \hat \Vv(\check \tau, \check \lambda)^{-1} \Wv(\check \tau, \check \lambda)\right]^{-1} \Vv(\check \tau, \check \lambda)^{-1} \hbvg(\check \lambda)\right)\\
                   & \cdot \left(1+O\left(\frac{1}{n}\right)\right).
  \end{aligned}
\end{equation}
This essentially proves Proposition 1.
    
\section{Numerical Accuracy of Approximate Bayes Factors}

The proposition 1 shows that the approximate Bayes factors under the BLMM have an $O(1/n)$ error bound for $\kappa \in [0,1]$. In this section, we perform numerical experiments to investigate impacts of different values of $\kappa$ and $n$ on the accuracy of the approximations. To this  end, we sub-sample the real genotype and phenotype data from the \ath example to obtain 3,000 SNPs at various sample sizes: $n = 50, 100, 150, 336$. 
For each sub-sampled data set, we compute the $\ABF$s for $\kappa = 0 \mbox{ and } 1$ based on the output from GEMMA using equation (3.7) in the main text. 

In comparison, we compute the Bayes factors by numerical integration. 
For general prior $p(\lambda, \tau)$, the two-dimensional numerical integration is practically implausible. 
(Although Monte Carlo integration is possible, we find the results typically exhibit extremely large variances.) 
To overcome this difficulty, we apply a specific form of prior, $$p(\lambda, \tau) = p(\lambda) p(\tau) \propto \frac{1}{\lambda} \, \frac{1}{\tau},$$
i.e., the priors of $\lambda$ and $\gamma$ are independent, $p(\tau)$ is assumed a limiting gamma distribution and $p(\lambda)$ is assumed a limiting inverse-gamma distribution. 
With this specification, it becomes possible to first  integrate out $\tau$ analytically conditional on $\gamma$ and then perform a one-dimension numerical integration with respect to $\gamma$ using the adaptive Gaussian quadrature algorithm implemented in R. 
It is worth emphasizing that in this exercise, the statistical interpretation of the priors or the resulting Bayes factors is unimportant, we simply attempt to evaluate the numerical differences by different Bayes factor computation methods.  

The comparison results for different sample sizes and $\kappa$ values are summarized in Figure \ref{numerical.comp.fig}. Regarding the results from the numerical integrations as the ``truth", we find that  for relatively small sample sizes,  $\ABF(\kappa=0)$ tends to be slightly conservative while $\ABF(\kappa=1)$ tends to be slightly anti-conservative. However, when the sample size grows $\sim 300$, both approximations become quite accurate. 

\begin{figure}[!ht]
\centering
    \subfloat[sample size $n = 50$]{%
     \includegraphics[width=.40\textwidth]{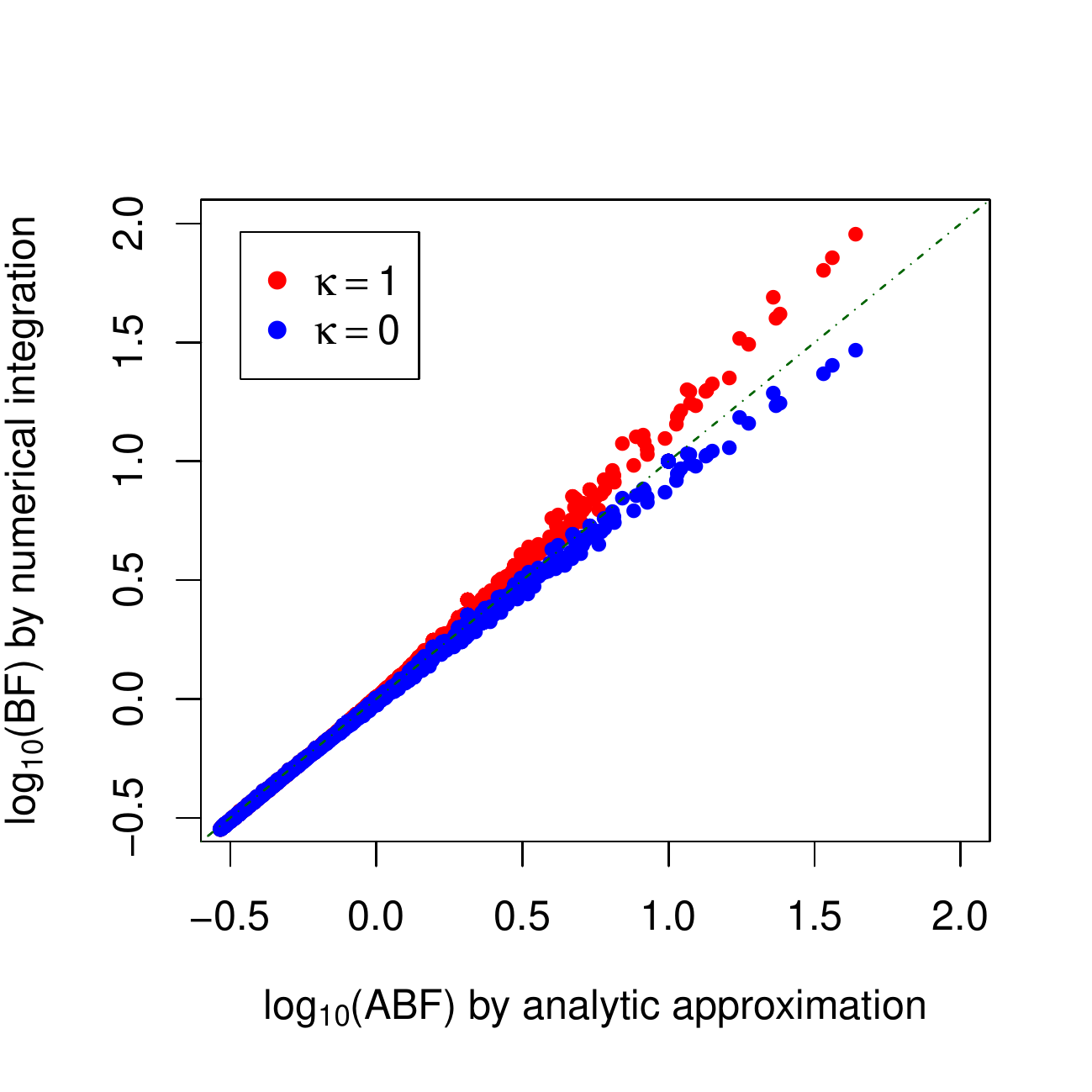}} 
     \subfloat[sample size $n = 100$]{%
    \includegraphics[width=.40\textwidth]{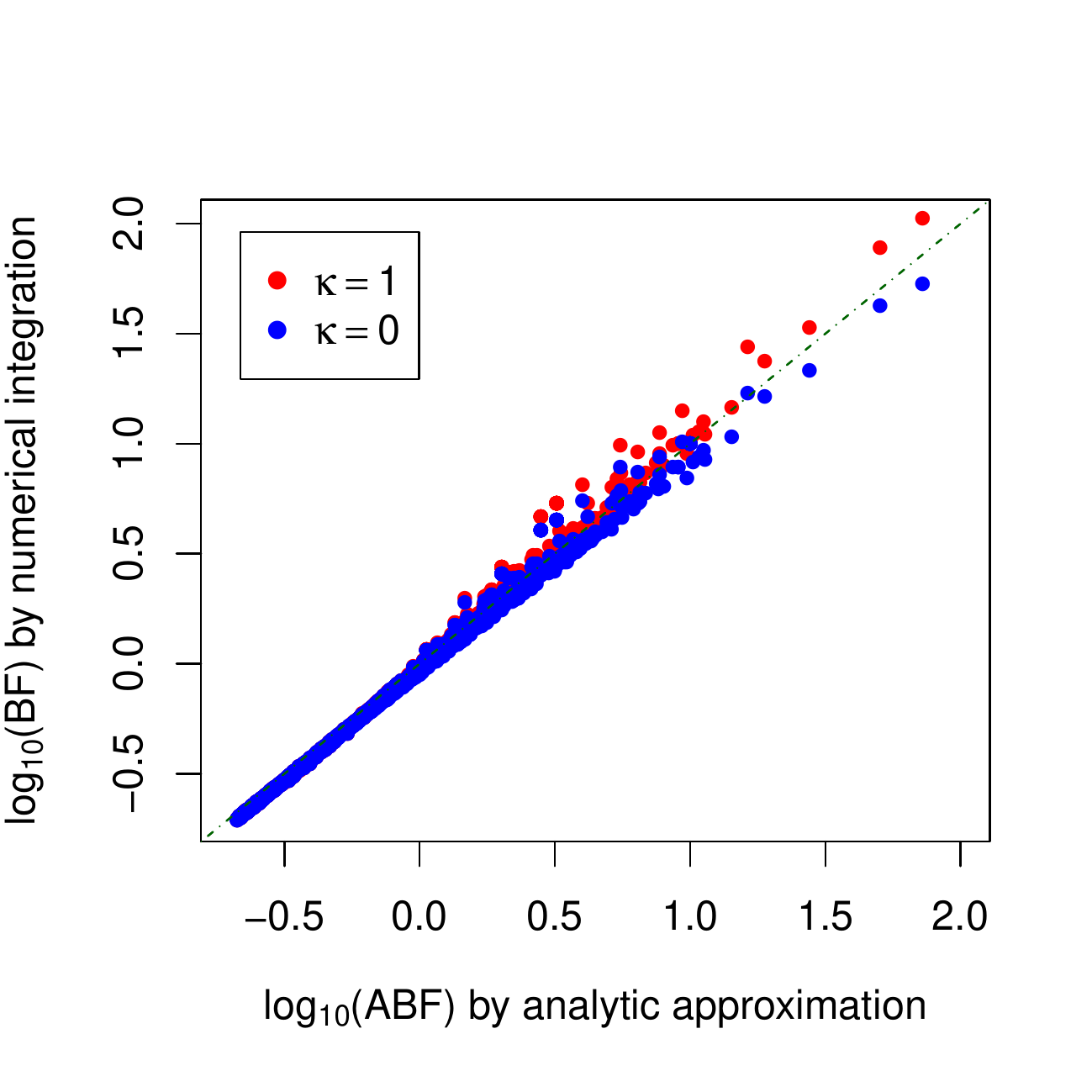}}\\
    \subfloat[sample size $n = 200$]{%
     \includegraphics[width=.40\textwidth]{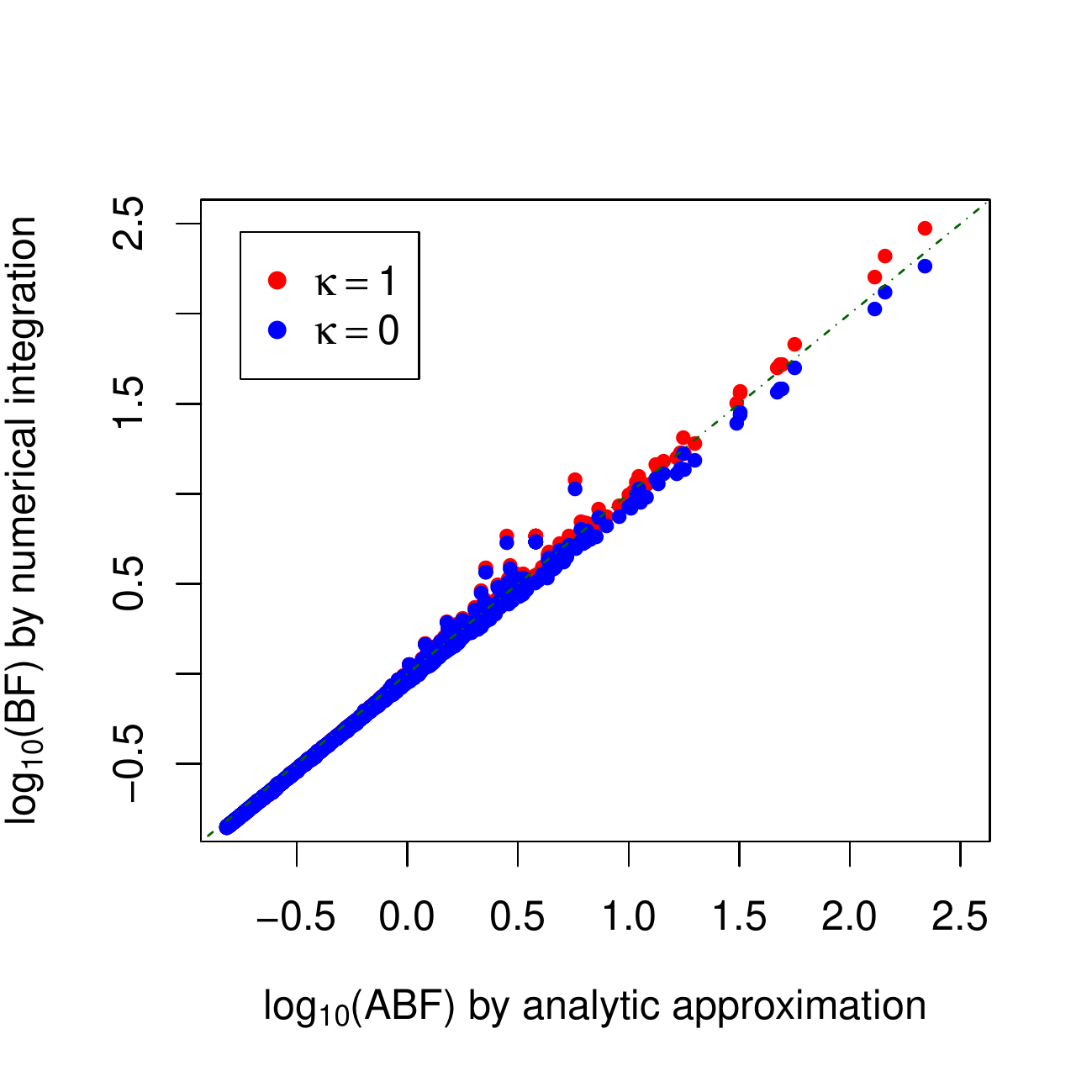}}
     \subfloat[sample size $n = 336$]{%
    \includegraphics[width=.40\textwidth]{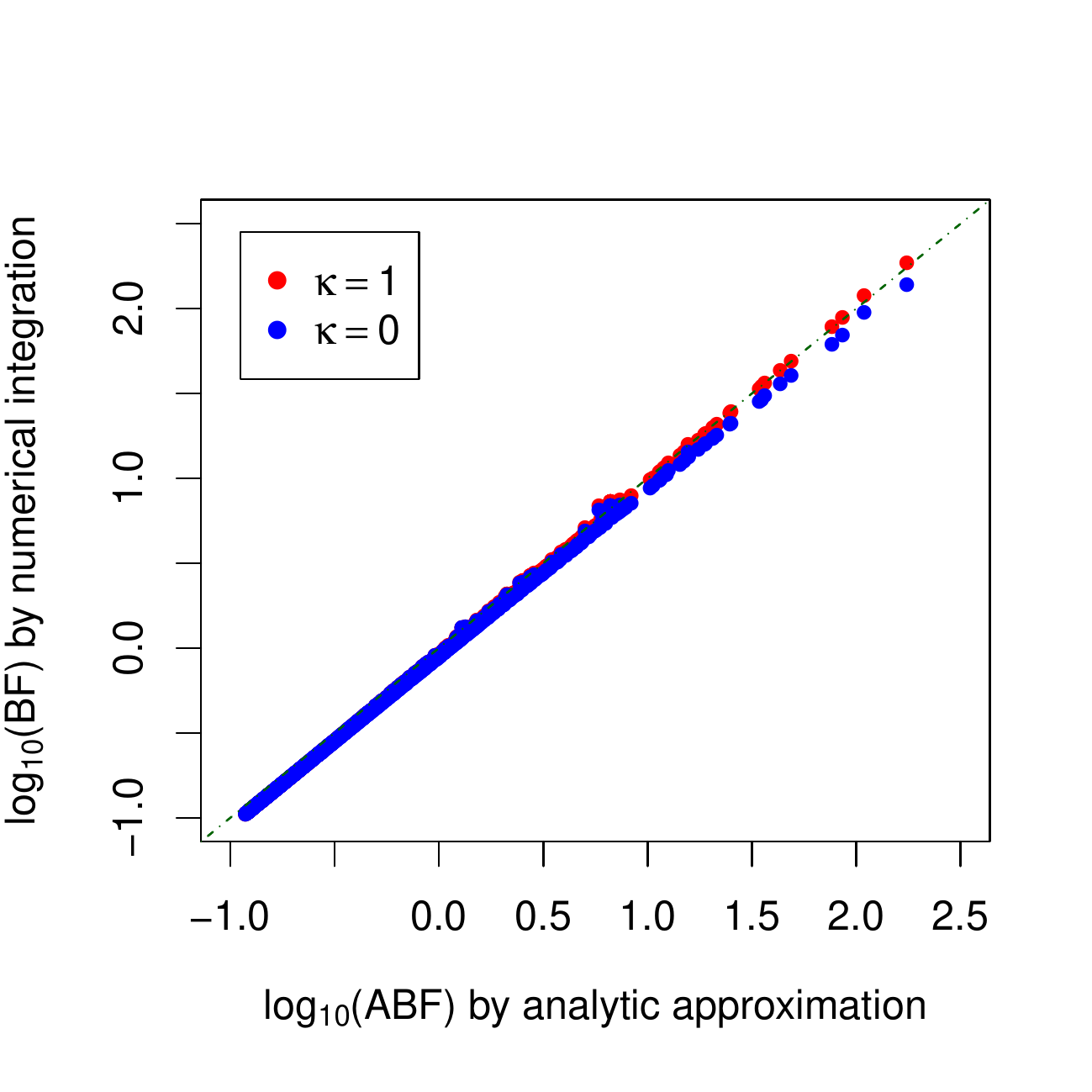}}\\
 \caption{\label{numerical.comp.fig} approximation accuracy by different sample sizes. In each panel, we plot the ``true" values of $\log_{10} \BF$ evaluated by numerical integration against their analytic approximations with different $\kappa$ values. For small sample sizes, $\ABF(\kappa=1)$ tends to be anti-conservative while $\ABF(\kappa=0)$ tends to be conservative comparing to the truth. Nevertheless, as the sample size grows, both become accurate.  
}  
\end{figure}

\section{Connection between Bayes factor and score statistic}

\subsection{Connection with fixed effect score statistic}

In this section, we give the mathematical details on connections between the approximate Bayes factor evaluated at $\kappa=0$ and the fixed effect score test statistics. In particular, it is sufficient to show that the quadratic form $\hat \beta(\tilde \lambda) \hat \Vv(\tilde \lambda, \tilde \tau)^{-1} \hat \beta(\tilde \lambda)$ corresponds to the score statistic for testing the fixed effect $\bv = 0$.

To see this, we relate $\hat \bv(\lambda)$ to $\tilde \av(\lambda)$, the MLE of $\av$ estimated under the null model restriction $\bv = 0$. The expression of $\tilde \av(\lambda)$ is given in (\ref{av.exp}), and it can be shown that
\begin{equation}
  \hat \bv (\lambda)  = \Qv(\lambda) \Gv'\Sigv(\lambda)^{-1}\left[\yv - \Xv \tilde \av(\lambda)\right],
\end{equation}  
where 
\begin{equation}
	 \Qv(\lambda) = \left[\Gv'\Sigv(\lambda)^{-1} \Gv -  \Gv'\Sigv(\lambda)^{-1} \Xv (\Xv'\Sigv(\lambda)^{-1} \Xv)^{-1} \Xv'\Sigv(\lambda)^{-1} \Gv\right]^{-1}.
\end{equation}
Furthermore,
\begin{equation}
   \hat \Vv (\lambda, \tau) = \tau^{-1} \Qv(\lambda)^{-1}.
\end{equation}
Therefore, it follows that
\begin{equation}\label{score.int.rst}
  \hat \Vv^{-1}(\lambda, \tau) \hbvg(\lambda) = \tau\Gv'\Sigv(\lambda)^{-1}\left(\yv - \Xv \tilde \av(\lambda) \right).
\end{equation}  
For $\lambda = \tilde \lambda$ and $\tau = \tilde \tau$ and noting the notations 
\begin{equation} \label{k0.def}
\begin{aligned}
 &  \tilde \av  = \tilde \av(\tilde \lambda), \\
 & \tilde \Sigv = \tilde \Sigma(\tilde \lambda), \\
 & \tilde \Qv = \tilde \Qv(\tilde \lambda), 
\end{aligned}
\end{equation}
the desired quadratic form $\hat \beta(\tilde \lambda) \hat \Vv(\tilde \lambda, \tilde \tau)^{-1} \hat \beta(\tilde \lambda)$ can be equivalent represented by
\begin{equation} \label{fix.score}
  \tilde \tau (\yv - \Xv \tilde \av)' \left[ \tilde \Sigv^{-1} \Gv \tilde \Qv \Gv' \tilde \Sigv^{-1}\right] (\yv - \Xv \tilde \av).
\end{equation}
It can be trivially derived from the first principle to show that expression (\ref{fix.score}) is indeed the score statistic under the LMM ((2.1) in main text) for testing the fixed effect $\bv = 0$.  

Alternatively,  we denote a projection matrix
\begin{equation}
  \tilde P_{x} = \Iv - \tilde \Sigv^{-1/2} \Xv (\Xv' \tilde \Sigv^{-1} \Xv)^{-1} \Xv'  \tilde \Sigv^{-1/2},
\end{equation}
and define 
\begin{equation}
  \tilde \Xv_G = \Gv' \tilde \Sigv^{-1/2} \tilde P_{x}. 
\end{equation}
We can further re-write (\ref{fix.score}) by
\begin{equation}
   \tilde  \tau \left((\yv - \Xv \tilde \av)'  \tilde \Sigv^{-1/2} \right) \left[\tilde \Xv_G (\tilde \Xv_G' \tilde \Xv_G)^{-1} \tilde \Xv_G' \right]  \left(  \tilde \Sigv^{-1/2}  (\yv - \Xv \tilde \av) \right).
\end{equation}
Note, matrix $\left[\tilde \Xv_G (\tilde \Xv_G' \tilde \Xv_G)^{-1} \tilde \Xv_G' \right]$ is also a projection matrix, and the results by \cite{Chen1983} indicate the above expression is indeed the desired score statistic.
 
\subsection{Connection with variance component score statistic}

The derivation of the score statistic based on LMM can be found in \cite{Chen2013}. 
Consider $\kappa=0$ and write $\Wv = \gamma \Mv$.  Under certain convergence condition (which typically requires the magnitude of $\gamma$ is bounded), it follows from the Neumann series expansion that
$$   \left(\Iv + \gamma \hat \Vv(\tilde \tau, \tilde \lambda)^{-1} \Mv\right)^{-1} = \sum_{n=0}^\infty (-\gamma)^n \left(\hat \Vv(\tilde \tau, \tilde \lambda)^{-1} \Mv\right)^n. $$ 
Combining the above expression with (\ref{score.int.rst}), 
\begin{equation} 
\begin{aligned} 
 &~~~\hbvg(\tilde \lambda)'\hat \Vv(\tilde \tau, \tilde \lambda)^{-1} (\gamma \Mv) \left[ \Iv +  \hat \Vv(\tilde \tau, \tilde \lambda)^{-1}(\gamma \Mv)\right]^{-1} \Vv(\tilde \tau, \tilde \lambda)^{-1} \hbvg(\tilde \lambda) \\
 & = \gamma \tilde \tau^2 \left(\yv - \Xv \tilde \av \right)'\tilde \Sigv^{-1}\Gv \Mv \Gv' \tilde \Sigv^{-1}\left(\yv - \Xv \tilde \av \right)  \\
 ~~~~~~~~~&+ \sum_{n=1}^\infty (-\gamma)^{n+1} \hbvg(\tilde \lambda)'\hat \Vv(\tilde \tau, \tilde \lambda)^{-1}  \Mv \left(\hat \Vv(\tilde \tau, \tilde \lambda)^{-1} \Mv\right)^n \Vv(\tilde \tau, \tilde \lambda)^{-1} \hbvg(\tilde \lambda) \\
 & = \gamma T_{\rm score} + \gamma^2 \sum_{n=0}^\infty (-\gamma)^{n} \hbvg(\tilde \lambda)'\hat \Vv(\tilde \tau, \tilde \lambda)^{-1}  \Mv \left(\hat \Vv(\tilde \tau, \tilde \lambda)^{-1} \Mv\right)^{n+1} \Vv(\tilde \tau, \tilde \lambda)^{-1} \hbvg(\tilde \lambda)
\end{aligned}   
 \end{equation}
 As $\gamma \to 0$, it follows that
 \begin{equation}
  \left(\Iv + \gamma \hat \Vv(\tilde \tau, \tilde \lambda)^{-1} \Mv\right)^{-1} = \Iv - \gamma \hat \Vv(\tilde \tau, \tilde \lambda)^{-1} \Mv + O(\gamma^2),
\end{equation}  
and
\begin{equation}
  \ABF(\Wv =  \gamma \Mv, \kappa=0) = \exp \left( \frac{\gamma}{2}\, T_{\rm score} \right)\cdot \big(1 + O(\gamma)\big).
\end{equation}

\section{MCMC algorithm for variable selection in BLMM} 

\cite{Wen2014a} provided an efficient MCMC algorithm to perform Bayesian variable selection in a very general complex linear model system. 
In the special case of a multiple linear regression model, their model selection formulation is almost identical to what we have described in section 4.2 of the main text, except when computing (approximate) Bayes factors, \cite{Wen2014a} assumes i.i.d residual errors and considers no random effect.   

Using the notations of (\ref{k0.def}), we note the BLMM induces a standard multiple linear  regression model on the transformed response variable, $\utilde{\yv} = \Sigv^{-\frac{1}{2}}\yv$, and transformed covariates, $\utilde{\Xv} = \Sigv^{-\frac{1}{2}}\Xv, \utilde{\Gv} = \Sigv^{-\frac{1}{2}}\Gv $. In particular, it is easy to see that, for arbitrary $\xi(\bv)$,  the approximate Bayes factors evaluated at $\kappa=0$ have identical values using either the original data $(\yv, \Xv, \Gv)$ or the transformed data $(\utilde{\yv}, \utilde{\Xv}, \utilde{\Gv})$, substituting $\tilde \Sigv$ for $\Sigv$. However, the induced linear model of the transformed data satisfies the requirement by the Metropolis-Hastings (M-H) algorithm described in \cite{Wen2014a}, i.e., i.i.d residual errors and no random effects. 

In practice, we implement the following algorithm to perform variable selections in the BLMM using the approximate Bayes factors for $\kappa=0$.
\begin{enumerate}
  \item Fit the null model and obtain $\tilde \lambda$.
  \item Compute $\tilde \Sigv$ and transform the observed data $(\yv, \Xv, \Gv)$ to $(\utilde{\yv}, \utilde{\Xv}, \utilde{\Gv})$.
  \item Apply the M-H algorithm of \cite{Wen2014a} on the transformed data set.
\end{enumerate}

The first step can be achieved by applying the software packages EMMAX \citep{Kang2010} or GEMMA \citep{Zhou2012}. The third step is implemented in the software package SBAMS \citep{Wen2014a}. To ensure $\ABF(\kappa=0)$ is faithfully computed, it is required to set ``-abf 0" option in SBAMS to estimated and use $\tilde \tau$ for all values of $\xi(\bv)$.

 \section{SNP set Simulation}
 
 In this section, we give the details of simulation schemes and parameters settings used in analysis.
 
\subsection{Simulation Details} 
 Our simulation scheme closely follows what is described in \cite{Lee2012}. For each simulated data set, we consider 5,000 non-overlapping SNP sets, with 1,000 SNPs in each set. 
For each SNP set, we simulate the genotypes of 2,000 individuals from a calibrated coalescent model \citep{Schaffner2005}, and the resulting LD structure within each SNP set mimics the LD patterns observed in European ancestry samples.   
For 3,500 out of 5,000 SNP sets, we simulate phenotypes from the null linear model
\begin{equation}
   \yv = 0.5 \xv + \ev,~ \ev \sim {\rm N}(0, \Iv),
\end{equation}
where $\xv$ is a vector of a continuous covariate generated from ${\rm N}(0,1)$. 
The phenotypes of the remaining 1,500 SNP sets are simulated from the following linear model,
\begin{equation}
   \yv = 0.5 \xv + \sum_i \beta_i \gv_i + \ev,~ \ev \sim {\rm N}(0, \Iv),
\end{equation}
by two different schemes. 
Both schemes assume 20\% of the SNPs in each SNP set are causal, and following \cite{Lee2012}, the effect size of each causal SNP is generated from the function $\beta_j = c | \log_{10}(m_j) |$, where $m_j$ is the minor allele frequency (MAF) of the assumed casual SNP. 
The two schemes differ in the choice of the parameter $c$. The first scheme assumes that all causal mutations are consistently deleterious by setting $c=0.1$ as a constant, and the second scheme assumes 40\% of the causal effects are protective by setting $c = -0.1$ and the rest of the 60\% causal mutations are deleterious by setting $c=0.1$. 
We vary the frequency of the sign-consistent alternative model ($\pi$) from 0.20 to 0.80. For each $\pi$ value, we simulate 10 independent data sets. 

\subsection{Analysis Details}
Following \cite{Lee2012}, we assume the same marginal weight vector, $\wv$, for SNPs in a SNP set in both SKAT and burden models. More specifically for each SNP $j$, we assign $w_j = {\rm Beta}({\rm MAF}_j,1,25)$, where ${\rm MAF}_j$ denotes the minor allele frequency of the $j$-th SNP.
Additionally for Bayesian analysis, we explicitly account for $\pi$ in our analysis, i.e., conditional on $H_0$ is false, we assume that with probabilities $\pi$ and $1-\pi$ the data are generated from the burden model and SKAT model, respectively.
We further re-normalize the marginal weights such that $\sum_j w_j =1$ and construct the priors on the standardized effect scale. 
More specifically, for the burden model, we assume $\Wv_b = \tau^{-1} \phi^2 (\sqrt{\wv}) (\sqrt{\wv})'$, and for the SKAT model, $\Wv_s = \tau^{-1} \phi^2 {\rm diag}(\wv)$. 
Under this normalized weighting formulation, the value of $\phi$ measures the prior overall magnitude of signal-noise ratio at the {\em set level}.
When computing the Bayes factors for both the burden and SKAT models, we consider a grid of $\phi$ values uniformly drawn from the set $\{\phi: 0.1, 0.2, 0.4, 0.8, 1.6 \}$.

We apply two different strategies in choosing the prior weights for computing Bayes factors. The first strategy simply assumes $\pi = \frac{1}{2}$. 
Alternatively, we estimate $\pi$ using an EM algorithm implemented in \citep{Flutre2013} by pooling all the SNP sets together. 
We then directly use the resulting Bayes factors and apply the FDR control procedure described in \citep{Wen2013} to perform hypothesis testing. 
For comparison, we apply the SKAT-O procedure \citep{Lee2012} and obtain the optimal $p$-value for each gene; then, we apply the Storey procedures to control FDR.

\section{Incorporating Detection of Common Variant Associations in SNP Set Testing}

Historically, the development of statistical methodology for SNP set testing is mostly motivated by detection of  rare genetic variant associations. 
Most recently, the study of expression quantitative trait loci (eQTLs) prompts integrating the ability of detecting both common and rare variant association signals into a unified SNP set analysis. 
More specifically, in eQTL analysis, the first line of the questions is to identify genes whose expression levels are regulated by either common or rare genetic variants (such genes are commonly referred to as eGenes). 
Most eGene detection approaches focus on the genomic region near the transcription start site of each target gene, i.e., the {\it cis} regulatory region. The SNPs within each region naturally form a candidate SNP set for each target gene.    

In the past, most eQTL studies have limited sample sizes and generally offer little power to detect rare variant associations.
As a consequence, most statistical methods for eGene detection, frequentist or Bayesian, aim to identify genes harbor common variant associations. 
But this is changing, many ongoing studies, e.g. the ongoing NIH GTEx project (\url{http://commonfund.nih.gov/GTEx/}), start collecting data with the sample size capable of discovering rare variant associations. 
In this section, we demonstrate that our Bayesian model averaging framework can efficiently combine existing approaches for testing both rare and common variant associations in SNP set testing for more powerful eGene discovery. 

\subsection{Bayesian Testing of Common Variant Associations in SNP Set}\label{common.set}

The Bayesian approaches for testing common variant associations in SNP sets have been proposed and applied in \cite{Servin2007, Flutre2013}. Here we extend their results into the context of BLMM.

Under the formulation of BLMM, to test against the null hypothesis, $H_0: \bv = 0$, we consider a specific class of alternative scenarios: exactly one variant in a set of $p$ candidate SNPs is truly associated \citep{Servin2007, Flutre2013}. Consequently,  there are $p$ different alternative models in total for a given SNP set.
Based on this simplifying assumption, without further information to distinguish the SNPs, we assign a discrete uniform prior to each SNP as {\em the} associated SNP.  We average over all $p$ alternative models and obtain the following Bayes factor for the SNP set
\begin{equation}
   \BF_{\rm cv} = \frac{1}{p} \sum_{i=1}^p \BF(\Wv_i),~\Wv_i = \phi^2 {\rm diag}(\gav_i)
\end{equation}     
where $\gav_i$ represents a $p$-dimensional binary indicator vector with only the $i$-th entry setting to 1. 
Clearly, each sepcification of $\Wv_i$ corresponds to an alternative single SNP association model, and each $\BF(\Wv_i)$ can be simplified to the form of equation (3.7) in the main text.   
 
It should be noted that the above alternative modeling approach is effective in identifying SNP set harboring common variant associations as demonstrated by \cite{Servin2007}, but it has very little power for rare variant testing.
It has been shown \citep{Wen2013} that the above Bayesian approach  has similar power comparing to the frequentist approach that takes the minimum single SNP association $p$-value as the test statistic for SNP set association. 

\subsection{Combined Bayesian SNP Set Testing of Common and Rare Variant Associations}

Based on the discussion in section \ref{common.set}, it is straightforward to formulate a Bayesian SNP set testing by averaging over three types of alternative models: burden, SKAT and the common variant (CV) models. A SNP set Bayes factor can be computed by
\begin{equation}\label{set.bf}
  \overline{\BF} = \pi_b \, \BF_{\rm burden} + \pi_s \, \BF_{\rm skat} + \pi_c \, \BF_{\rm cv},
\end{equation} 
where $\pi_b, \pi_s$ and $\pi_c$ denote the relative frequency of burden, SKAT and CV models in alternative settings. Although the default objective prior $\pi_b = \pi_s = \pi_c = \frac{1}{3}$ serves as a reasonable starting point, in typical eQTL studies, it is highly plausible to estimate these quantities using a hierarchical model by pooling information across genes genome-wide.  

In comparison, to the best of our knowledge, there is no existing frequentist SNP set testing approach that is optimally designed for detecting both common and rare variant associations in a computationally efficient way.  

\subsection{Extended Simulation Studies}

We perform additional simulation studies to demonstrate the power of the proposed Bayesian SNP set testing approach. Particularly, in addition to the schemes that simulate sign-consistent and sign-inconsistent multiple rare variant associations, we simulate a third type of alternative scenario where common variants drive the genetic associations. More specifically, we simulate according to the following linear model
\begin{equation}\label{cv}
  \yv = 0.5 \xv + \sum_{i \in S} \beta_i \gv_i + \ev, ~ \ev \sim {\rm N}(0, \Iv),
\end{equation}  
where $S$ denote a set of either one, two or three SNPs whose allele frequencies $\ge 0.05$. For each $i \in S$, we randomly draw $\beta_i$ from the distribution $ {\rm N}(0, 1)$. 
    
In each simulated data set, we still consider 5,000 SNP sets where 2,500 are simulated from the null model. For the remaining 2,500 SNP sets, we simulate the phenotypes using the three types of alternative models according to a pre-defined parameter $\piv = (\pi_1, \pi_2, \pi_3)$, where $\pi_1, \pi_2$ and $\pi_3$ denote the relative frequency of sign consistent model, sign inconsistent model and common variant model (\ref{cv}), respectively. We vary $\piv$ values to generate different simulated data sets.       
     
To analyze the simulated data set,  we estimate $(\pi_b, \pi_s, \pi_c)$ using the same hierarchical model employed in the rare variants SNP set testing and compute the Bayes factor based on (\ref{set.bf}).  For  comparison, we compute the SKAT-O $p$-values for each SNP set. We again examine the FDR control and power for each analysis method.

The results from those additional simulation studies are summarized in Table \ref{additional.sim.tbl}. 
All experimented methods control FDR at the desired level. 
Interestingly, we find both burden and SKAT models have decent power in detecting SNP sets harboring common variant associations, which is reflected by the overall good performance of the SKAT-O approach. 
Nevertheless as expected,  by explicitly targeting  and modeling all possible alternative scenarios, the Bayesian approach yield substantially higher power.

\begin{table}[h!t]
\begin{center}
{\fontsize{10}{12}\selectfont 
\begin{tabular} { c  c    c c  c  c  c }
\hline
   ~ & ~& \multicolumn{2}{c}{FDR} &~& \multicolumn{2}{c}{Power}\\
\cline{3-4} \cline{6-7}
 Setting ($\piv$) & ~& SKAT-O  & Bayesian-E & ~ & SKAT-O  & Bayesian-E \\
\hline
\hline
 $(0.25, 0.35, 0.40)$ & ~ &  0.046 & 0.044 & ~ & 0.831 & 0.921 \\
 $(0.25,  0.25, 0.50)$ & ~ &  0.041 & 0.042 & ~ & 0.815  & 0.917 \\
 $(0.25, 0.15, 0.60)$ &~&  0.050 & 0.049 & ~ & 0.799 & 0.904 \\
 $(0.20, 0.10, 0.70)$ &~& 0.047 & 0.048 & ~  & 0.789&  0.873 \\
 $(0.10, 0.15, 0.75)$ &~& 0.042 & 0.042 & ~ & 0.757 &  0.872 \\
\hline
\end{tabular}}
\caption{ \small \label{additional.sim.tbl} Realized false discovery rate and power in extended simulation studies of SNP set analysis. The first column (setting) indicates the distribution of SNP sets simulated by sign-consistent, sign-inconsistent, and common variant association models. For the SKAT-O procedure, the resulting $p$-values are further processed by the Storey procedure for FDR controls.  The Bayesian approach ("Bayesian-E")  estimates $\pi$ from the data and compute the Bayes factor for each SNP set according to (\ref{set.bf}). The FDR control for the Bayes factors is performed using the EBF procedure described in \cite{Wen2013}.  }

\end{center}
\end{table}

\bibliographystyle{natbib}  
\bibliography{BLMM,BLMM_supp}

\end{document}